\newtheorem{theorem}{Theorem}[section]
\newtheorem{lemma}[theorem]{Lemma}
\newtheorem{corollary}[theorem]{Corollary}
\newtheorem{definition}[theorem]{Definition}
\newtheorem{observation}[theorem]{Observation}
\newtheorem{example}[theorem]{Example}
\begin{document}

\title{Weighted Envy-Freeness in House Allocation}

\author{Sijia Dai 
 \thanks{Shenzhen Institute of Advanced Technology, Chinese Academy of Sciences.\{sj.dai, yc.xu, zhangyong\}@siat.ac.cn, xiaoshu712@126.com}
 \and 
 Yankai Chen $^*$
 \and
 Xiaowei Wu \thanks{IOTSC, University of Macau.  xiaoweiwu@um.edu.mo}
\and 
 Yicheng Xu $^*$ 
\and 
 Yong Zhang $^*$ 
}

\maketitle
%

\begin{abstract}
The classic house allocation problem involves assigning $m$ houses to $n$ agents based on their utility functions, ensuring each agent receives exactly one house. 
A key criterion in these problems is satisfying fairness constraints such as envy-freeness. 
We extend this problem by considering agents with arbitrary weights, focusing on the concept of weighted envy-freeness, which has been extensively studied in fair division.
We present a polynomial-time algorithm to determine whether weighted envy-free allocations exist and, if so, to compute one. 
Since weighted envy-free allocations do not always exist, we also investigate the potential of achieving such allocations through the use of subsidies.
We provide several characterizations for weighted envy-freeable allocations (allocations that can be turned weighted envy-free by introducing subsidies) and show that they do not always exist, which is different from the unweighted setting.
Furthermore, we explore the existence of weighted envy-freeable allocations in specific scenarios and outline the conditions under which they exist.

\end{abstract}
%
%
%
\section{Introduction}

The house allocation problem is a significant class of resource allocation issues, both in theory and practical application. 
These problems involve distributing a set of indivisible items, such as houses, offices, or tasks, among a group of agents with possibly differing preferences, under the constraint that each agent is assigned exactly one item~\cite{hylland1979,zhou1990,abdulkadirouglu1998}. Various desirable properties were proposed for house allocation problems, or more broadly speaking resource allocation problems.
One of the critical challenges is achieving fairness and efficiency without the use of monetary exchange, which is often the case in subsidized housing scenarios like on-campus housing for students or public housing for low-income families. 

The house allocation problem has been primarily explored for designing incentive-compatible mechanisms and ensuring economic efficiency~\cite{abdulkadirouglu2003,svensson1999}.
Recent works in house allocation have increasingly focused on fairness, often captured by well-studied notions such as envy-freeness (EF), which requires that every agent is assigned one of his most preferred houses among all the allocated houses~\cite{varian1973,foley1966resource,Haris2024,beynier2019local}.
Unfortunately, envy-free allocations do not always exist. For example, consider two agents who both prefer the same house.
Gan et al.~\cite{gan2019envy} developed a polynomial-time algorithm to determine the existence of envy-free house allocations and to compute one if they exist. When no envy-free allocation exists, the typical aim is to compromise on the fairness objective.
Kamiyama et al.~\cite{kamiyama2021complexity} demonstrated that finding allocations which minimize the number of envious agents is NP-complete, even for binary utilities. 
Madathil et al.~\cite{madathil2024cost} proposed three fairness metrics for the amount of envy experienced by an agent and studied minimization problems corresponding to each of these metrics.
Hosseini et al.~\cite{hosseini2023graphical} considered scenarios in which agents have identical utility functions and only envy others based on a predefined social graph. Their goal was to minimize the total envy among all agents, and they also completed a characterization of approximability across various graphs in~\cite{hosseini2024tight}.
A very recent study by Hosseini et al.~\cite{hosseini2024degree} investigated the trade-off between efficiency and fairness in house allocation. 

In order to get around the possible non-existence of envy-free allocations, recent works turned to allocations with subsidies, which have been well-studied in fair division problems~\cite{conf/wine/WuZZ23,conf/sigecom/BrustleDNSV20,SAGT2019}. 
The aforementioned result of Halpern and Shah~\cite{SAGT2019} implies that every house allocation problem instance can always achieve envy-freeness by providing subsidy.
However, Choo et al.~\cite{CHOO2024} demonstrated that finding the minimum subsidy required to achieve envy-freeness is NP-hard in general.
Another natural direction is to investigate the effects of strategic behavior in fair house allocation.
In this vein, recent works investigated the trade-offs between competing notions of efficiency, fairness and incentives in allocation mechanisms~\cite{shende2023strategy,krysta2014size,thakral2016public}.

To the best of our knowledge, the house allocation problem has not been studied in the context of weighted envy-freeness, which has been studied widely in fair division problems~\cite{brams1996fair,robertson1998cake,chakraborty2021weighted,wu2023weighted,aziz2023best,springer2024almost}.
Envy-freeness can be extended to a more general setting where agents have weights that designate their entitlements.
The weights of agents can represent other publicly recognized and accepted measures of entitlement, such as eligibility or merit. 
A common example in housing allocation is that larger families naturally require larger living spaces compared to smaller families. 
Therefore, even if the larger family gets a more valuable house than the smaller family, envy could still occur: the larger family naturally deserves more.

In this article, we initiate the study of weighted envy-freeness for the house allocation problem.
We consider the existence of weighted envy-free allocations, with and without subsidies, which have not been considered for the house allocation problem.
It remains unknown whether weighted envy-free allocations exist when subsidies are allowed.

\subsection{Our Results}

Our work considers agents with arbitrary weights in house allocation problems and focuses on the fairness notion of weighted envy-freeness.
We present the results regarding the existence and computation of weighted envy-free allocations, and weighted envy-free allocations with subsidy.

\smallskip
\noindent
\textbf{Result 1.}
We present a polynomial-time algorithm that determines the existence of weighted envy-free allocations in house allocation and computes such an allocation, if they exist.
\smallskip

In our algorithm, we preprocess agents' utility functions according to their weights before proceeding with the allocation. Subsequently, we employ a matching algorithm to evaluate potential allocation configurations that would confirm the existence of weighted envy-free allocations in the instance.
We iteratively ban some agents from receiving some houses (unless a weighted envy-free allocation is found) and show that it will not change the existence of weighted envy-free allocations.
If at some point the allowed assignments do not admit weighted envy-free allocations, our algorithm stops and announces that no weighted envy-free allocations exist.

\smallskip
\noindent
\textbf{Result 2.}
We characterize weighted envy-freeable allocations, showing sufficient and necessary conditions under which an allocation is weighted envy-freeable.
In contrast to the unweighted case, we give an instance showing that weighted envy-freeable allocations do not always exist.
\smallskip

We refer to an allocation of houses as weighted envy-freeable if it is possible to eliminate envy by paying each agent some amount of money.
Surprisingly, weighted envy-freeable allocations do not always exist in the house allocation problem when agents have arbitrary weights.
Interestingly, our characterization reveals that weighted envy-freeable fair allocations (in which all items should be allocated and there is no constraint on the number of items each agent can receive) always exist.
In other words, the agent weights and allocation constraints (together) disallow the existence of weighted envy-free allocations, even if we are allowed to (arbitrarily) subsidize the agents.

\smallskip
\noindent
\textbf{Result 3.}
We discuss the existence of weighted envy-freeable allocations and propose polynomial time algorithms to compute them in special cases, including those with identical utility functions, two types of agents, and bi-valued utility functions.

\smallskip

Our results address the issue of achieving weighted envy-freeness for agents with arbitrary weights in house allocation problems, which has been overlooked in existing  works.

\subsection{Other Related Works}

In fair division, deciding the existence of envy-free allocations with zero subsidy is already NP-hard. 
In contrast, this task can be efficiently solved for house allocation~\cite{gan2019envy}. 
%
The use of subsidies has been well-explored in the fair allocation of indivisible goods~\cite{conf/sigecom/BrustleDNSV20,goko2024fair,SAGT2019,barman2022achieving}.
Another closely related work pertains to the rent division problem, which involves allocating $m = n$ indivisible goods among $n$ agents and dividing the fixed rent among them~\cite{gal2017fairest,aragones1995derivation,edward1999rental,svensson1983large}. 
More general models where envy-freeness is achieved through subsidies have also been considered in~\cite{haake2002bidding,meertens2002envy}.
When considering agents have arbitrary weights, our setting is similar to the fair allocation under weighted setting.
Other related works explored relaxed weighted fairness notions with subsidies, aiming to minimize the required subsidy~\cite{conf/wine/WuZZ23,wu2024tree}.
For a comprehensive review of existing research on weighted and unweighted fair allocation, please refer to the recent surveys~\cite{suksompong2024weighted,amanatidis2023fair}.

\subsection{Organization of the Paper}
The following sections of this paper are structured as follows.
We first provide the notations and definitions for weighted envy-freeness of house allocation in Section 2. 
In Section 3, We provide a polynomial-time algorithm to determine the existence of weighted envy-free house allocations and to compute one if they exist.
In Section 4, we consider weighted envy-free house allocations with subsidies and present the non-existence results of allocations that are weighted envy-freeable.
We discuss some special cases regarding the existence of weighted envy-freeable allocations in Section 5, and conclude the paper with the open questions in Section 6.


\section{Preliminaries}
In the following, we introduce the notations and fairness notions for the house allocation problem. 
Let $N=\{1,2,\dots,n\}$ denote the set of $n$ agents and $H=\left\{h_{1}, \dots, h_{m}\right\}$ denote the set of $m$ houses, where $m \ge n$.
Each agent $i\in N$ has a non-negative utility $v_i(h)$ for each house $h \in H$. 
We denote the vector of utility functions by $\mathbf{v}=(v_1,\dots,v_n)$.
An allocation $\mathbf{A} =(A_1,\dots, A_n)$ is a list of $n$ distinct houses, 
where $A_i \in H$ is the house allocated to agent $i$. 
It would sometimes be helpful to regard an allocation as a $N$-saturating matching\footnote{A matching is called $N$-saturating if all agents in $N$ are matched.} in a complete bipartite graph, where the two sides of vertices are the agents $N$ and the houses $H$.

In the weighted setting, each agent $i\in N$ is additionally endowed with a weight $w_i>0$ representing her entitlement.
Let $W=(w_1,\dots,w_n)$.
The unweighted setting corresponds to the special case where all weights are equal.
We define a house allocation problem to be the tuple $\mathcal{I}=(N,H,W,\mathbf{v})$.
The following fairness notion is central to our work.

\begin{definition}[WEF Allocation]
\label{definition:WEF Allocation}
In the weighted setting, an allocation $\mathbf{A}$ is  weighted envy-free (WEF) if for every pair of agents $i,j\in N$, it holds that $$\frac{v_i(A_i)}{w_i} \ge \frac{v_i(A_j)}{w_j}.$$
\end{definition}

When weighted envy-free allocations do not exist, we aim to achieve a weighted envy-free allocation through subsidies.
We denote by $p_i \in \mathbb{R}_{\ge 0}$ the amount of money received by agent $i$.
The subsidies provided to the agents are represented by a non-negative \textit{subsidy vector} $\mathbf{P}=(p_1,p_2,\dots,p_n)$.
We refer to an allocation $\mathbf{A}$ with a corresponding subsidy vector $\mathbf{P}$ as an outcome $(\mathbf{A},\mathbf{P})$.
Under the outcome $(\mathbf{A},\mathbf{P})$, the utility of agent $i$ is $v_i(A_i)+p_i$.
Note that allocation $\mathbf{A}$ is equivalent to allocation with subsidies $(\mathbf{A},\mathbf{0})$, where each agent receives zero subsidy. 
We can now introduce the definition of weighted envy-freeness outcome as follows.

\begin{definition}[WEF Outcome]
\label{definition:WEF Outcome}
An outcome $(\mathbf{A},\mathbf{P})$ is weighted envy-free if for every pair of agents $i,j\in N$, it holds that 
$$\frac{v_i(A_i)+p_i}{w_i} \ge \frac{v_i(A_j)+p_j}{w_j}.$$
\end{definition}

We say that a subsidy vector $\mathbf{P}$ is \textit{envy-eliminating} for an allocation $\mathbf{A}$ if the outcome $(\mathbf{A}, \mathbf{P})$ satisfies weighted envy-freeness. 
Let $\mathcal{P}(\mathbf{A})$ be the set of envy-eliminating subsidy vector for $\mathbf{A}$.

\begin{definition}[WEFable]
\label{definition:WEFable}
An allocation  $\mathbf{A}$ is called weighted envy-freeable (WEFable) if there exists a subsidy vector $\mathbf{P}$ such that the outcome $(\mathbf{A}, \mathbf{P})$ is weighted envy-free, that is, $\mathcal{P}(\mathbf{A}) \neq \emptyset$.
\end{definition}

A result by Halpern and Shah~\cite{SAGT2019} implies the following observation directly.
\begin{observation}
\label{observation:EFable exist}
For the unweighted case, computing a maximum weight matching between $N$ and $H$ (or any $n$ houses) gives an envy-freeable allocation in house allocation problems.
It means that every instance of the house allocation problem (with arbitrary n and m) has at least one envy-freeable allocation. 
\end{observation}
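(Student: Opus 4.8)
The plan is to invoke the known unweighted characterization of envy-freeable allocations due to Halpern and Shah~\cite{SAGT2019}, which states that an allocation $\mathbf{A}$ is envy-freeable if and only if $\mathbf{A}$ maximizes utilitarian social welfare over all reassignments of the \emph{same} bundle of items among the agents; equivalently, the ``envy graph'' (on agents, with edge weights $v_i(A_j) - v_i(A_i)$) contains no positive-weight cycle. First I would set up the bipartite graph between $N$ and the full house set $H$, with the edge $\{i, h\}$ having weight $v_i(h)$, and compute a maximum-weight $N$-saturating matching $M$; let $\mathbf{A}$ be the induced allocation and $S \subseteq H$ the set of $n$ matched houses. (One can equally well fix in advance any $n$-subset of houses and take a max-weight matching within it.)

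Next I would argue that $\mathbf{A}$ is envy-freeable by contradiction: if not, then by the Halpern--Shah characterization there is a permutation $\sigma$ of the agents such that reallocating $S$ according to $\sigma$ strictly increases total utility, i.e. $\sum_i v_i(A_{\sigma(i)}) > \sum_i v_i(A_i)$. But since $\sigma$ just permutes the houses of $S$ among the agents, the reallocated assignment is itself an $N$-saturating matching in the bipartite graph (using only houses in $S \subseteq H$), with strictly larger total weight than $M$ — contradicting the maximality of $M$. Hence $\mathcal{P}(\mathbf{A}) \neq \emptyset$, and since the instance and the chosen house subset were arbitrary, every house allocation instance (for any $n$ and $m \ge n$) admits an envy-freeable allocation, which can moreover be computed in polynomial time via a maximum-weight matching routine (e.g., the Hungarian algorithm).

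The only subtlety worth spelling out is that the Halpern--Shah result is stated for the general fair-division setting where agents may receive arbitrary bundles; here the allocation constraint restricts each agent to exactly one house, so I should note that their ``swap/reallocation'' argument specializes cleanly to the house setting — a welfare-improving reallocation of the \emph{fixed} item set $S$ among the $n$ agents is precisely another matching saturating $N$ within $S$, so no generality is lost and the max-weight matching is optimal among exactly the reallocations that the characterization quantifies over. I do not expect a genuine obstacle: the statement is essentially a direct corollary, and the main content is just recording that maximizing over matchings coincides with maximizing over reallocations of a fixed bundle.
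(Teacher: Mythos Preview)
Your proposal is correct and matches the paper's approach: the paper does not give an independent proof but simply records that the observation follows directly from the Halpern--Shah characterization, which is exactly the result you invoke and unpack. Your added remark that in the house-allocation setting the relevant reassignments of the fixed bundle $S$ coincide with $N$-saturating matchings is the right justification and is implicit in the paper's one-line citation.
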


\begin{definition}[PO]
\label{definition:PO}
An allocation $\mathbf{A}$ is Pareto optimal (PO) if there exists no allocation $\mathbf{A'}$ 
such that $v_i(A_i') \ge v_i(A_i)$ for all  $i \in N$ and $v_j(A_j') > v_j(A_j)$ for some $j \in N$.
\end{definition}

\section{Computation of Weighted Envy-Free Allocations}
\label{section:Computation of Weighted Envy-Free Allocations}
In this section, we propose a polynomial time algorithm that given any instance of the weighted house allocation problem, decides whether WEF allocations exist and, if they do, computes one.
Our algorithm is inspired by the algorithm proposed by Gan et al.~\cite{gan2019envy} for the unweighted version of the problem, and can be regarded as a generalization of their algorithm.

\paragraph{General Idea.}
We employ a filtering algorithm to preliminarily screen out infeasible assignments.
For example, suppose in the unweighted case there are two agents both value the same house as their only favourite one.
Then such house cannot be allocated to any agent as otherwise envy occurs.
Therefore we can safely remove the house from the instance without changing the existence of envy-free allocations.
However, when agents have arbitrary weights, it may still be feasible to allocate the house to an agent with much larger weights, in an WEF allocation.
Therefore, instead of removing the house, we can only ``ban'' the house from being allocated to certain agents.
Note that an agent might be banned from receiving a house by any other agent, as long as the later believes that such allocation will result in inevitable weighted envy. 
We formalizing this idea into an algorithm that iteratively bans some agents from getting some houses, which corresponds to deleting edges in a bipartite graph between agents and houses, until all remaining edges are feasible.
If the resulting bipartite graph admits a matching that matches all the agents, a WEF allocation is returned; otherwise more edges will be deleted and the algorithm recurses.
If no WEF allocation is returned when some agent has degree $0$ in the bipartite graph, the algorithm terminates and announces that no WEF allocations exist.

\paragraph{The Algorithm.}
To formalize the process, we define the following \emph{virtual instance}.
Given the instance $\mathcal{I}=(N,H,W,\mathbf{v})$, we create a set of virtual houses as
\begin{equation*}
    E = \{ e(i,h) : i\in N, h\in H \}.
\end{equation*}
Note that each virtual house $e(i,h)$ corresponds to an assignment of house $h$ to agent $i$\footnote{To avoid confusion, in this paper we use ``assignment'' to refer to assigning a house to an agent; and ``allocation'' to refer to the overall allocation (of houses to agents).} and we have $|E| = n\cdot m$.
We can also interpret $E$ as the set of edges between agents $N$ and houses $H$ in a bipartite graph.
We expand the utility function of each agent $i$ to a utility function $v'_i$ on the virtual houses, by defining $v'_i(e(j,h)) = \frac{v_i(h)}{w_j}$ for all $e(j,h)\in E$.
We further define $\mathsf{Top}_i(E) = \arg\max_{e\in E}\{ v'_i(e) \}$ as the subroutine that returns the set of virtual houses in $E$ that agent $i$ values the highest.
Recall that in a WEF allocation $\mathbf{A}$, agent $i$ is weighted envy-free towards agent $j$ if $\frac{v_i(A_i)}{w_i} \geq \frac{v_i(A_j)}{w_j}$, which translates into $v'_i(e(i,A_i)) \geq v'_i(e(j,A_j))$.
Therefore, suppose $\mathsf{Top}_i(E)$ does not contain any virtual house of the form $e(i,\cdot)$, then all assignments correspond to the virtual houses in $\mathsf{Top}_i(E)$ cannot appear in any WEF allocation, as otherwise the allocation is not WEF for agent $i$, no matter which house agent $i$ gets.
Formally speaking, we let $E_i = \{ e(i,h):h\in H \}$ for all $i\in N$.
As long as there exists an agent $i\in N$ such that $\mathsf{Top}_i(E) \cap E_i = \emptyset$, we update $E \gets E\setminus \mathsf{Top}_i(E)$ (we show in Lemma~\ref{lemma:safe-removal-of-edges} that such operation does not change the existence of WEF allocations).
The above procedure repeats until $|E| < n$ (in this case we can guarantee that no WEF allocations exist) or for all $i\in N$, $\mathsf{Top}_i(E)$ contains a virtual house $e(i,h)$ for some $h$.
Note that if agent $i$ is assigned house $h$ and $e(i,h)\in \mathsf{Top}_i(E)$, then as long as all other assignments are contained in $E$, the allocation is WEF for agent $i$ (see Lemma~\ref{lemma:returned-allocation-is-WEF} for a formal proof).
However, it is possible that there exists another agent $j$ whose favourite assignment is $e(j,h)$, in which case we need to decide which agent the house $h$ should be assigned to.
In this case, we use matching theory to check whether there exist allocations that assign each agent one of her favourite virtual houses; and if none is found, we use a subroutine by Gan et al.~\cite{gan2019envy} to find minimal Hall violators and remove more edges.

We summarize the details of the algorithm in Algorithm~\ref{Alg1}.

\begin{definition}[Hall Violator]
For a bipartite graph $G=(N,H,E')$, a Hall violator $Z\subseteq N$ is a set of vertices such that $|Z|> |H(Z)|$, where $H(Z)\subseteq Y$ are the vertices adjacent to $Z$. 
It is a minimal Hall violator if no subset of $Z$ is a Hall violator.
\end{definition}

\begin{algorithm}[htp]
\caption{Algorithm for Computing Weighted Envy-Free Allocations}

\label{Alg1}
\KwIn{An instance $\mathcal{I}=(N,H,W,\mathbf{v})$;}
\KwOut{A WEF allocation $\mathbf{A}$ or ``No such allocation'';}
Define $E = \{ e(i,h): i\in N, h\in H \}$ and $E_i = \{ e(i,h): h\in H \}$ for all $i\in N$ \;
Let $v'_i(e(j,h)) = \frac{v_i(h)}{w_j}$ for all $i\in N$ and $e(j,h)\in E$ ;\\
\While{$|E| \geq n$}
{
\While{$\exists i\in N, \mathsf{Top}_i(E)\cap E_i = \emptyset$}
{
Update $E \gets E \setminus  \mathsf{Top}_i(E)$ \;
}
Construct a bipartite graph $G = (N,H,E')$ where there is an edge from agent $i$ to house $h$
if and only if $e(i,h) \in \mathsf{Top}_i(E)$ \;
\If{there exists an $N$-saturating matching in $G$}
{
\textbf{return} the corresponding allocation $\mathbf{A}$ and \textbf{terminate};
}
\Else
{
Find a minimal Hall violator $Z \subseteq N$ \;
Update $E \gets E \setminus \{ e(i,h): i\in Z, (i,h) \in E' \}$ \;
}
}
\textbf{output} ``No such allocation''.
\end{algorithm}

Next, we establish some useful lemmas that will be used to prove the correctness of our algorithm.
We first show that if the algorithm returns an allocation (in line 8), then it is WEF.

\begin{lemma} \label{lemma:returned-allocation-is-WEF}
If Algorithm~\ref{Alg1} returns an allocation $\mathbf{A}$, then the allocation is WEF.
\end{lemma}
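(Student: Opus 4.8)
The plan is to unpack exactly what it means for Algorithm~\ref{Alg1} to return an allocation in line~8 and then verify the two-sided inequality in Definition~\ref{definition:WEF Allocation} directly. The algorithm returns an allocation $\mathbf{A}$ only when it has found an $N$-saturating matching in the bipartite graph $G=(N,H,E')$ constructed from the current edge set $E$, where $(i,h)\in E'$ iff $e(i,h)\in\mathsf{Top}_i(E)$. So for the returned allocation, we know two things about the \emph{current} set $E$: first, $e(i,A_i)\in\mathsf{Top}_i(E)$ for every agent $i$ (that is what it means for the matching to live in $G$); second, since $\mathbf{A}$ is an allocation, $e(j,A_j)\in E$ for every agent $j$ (every used assignment is still present in the surviving edge set).

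The key step is then the following. Fix any pair $i,j\in N$. Because $e(i,A_i)\in\mathsf{Top}_i(E)=\arg\max_{e\in E}v'_i(e)$ and $e(j,A_j)\in E$, we get
\[
v'_i(e(i,A_i)) \;\ge\; v'_i(e(j,A_j)).
\]
Now I would substitute the definition $v'_i(e(k,h))=\frac{v_i(h)}{w_k}$: the left side is $\frac{v_i(A_i)}{w_i}$ and the right side is $\frac{v_i(A_j)}{w_j}$, which is precisely the WEF condition $\frac{v_i(A_i)}{w_i}\ge\frac{v_i(A_j)}{w_j}$. Since $i,j$ were arbitrary, $\mathbf{A}$ is WEF.

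The only genuine subtlety — the step I would be most careful about — is the bookkeeping around \emph{which} edge set $E$ is in force at the moment line~8 executes. One must confirm that after the inner \textbf{while} loop (line~4--5) and the matching check, the graph $G$ is built from the same $E$ that still contains all the $e(j,A_j)$; this is immediate from the code since $E$ is only modified in lines~5 and~12, and line~8 is reached without passing through line~12 in the current iteration, but it deserves an explicit sentence. A secondary point worth stating: $\mathsf{Top}_i(E)$ could in principle contain several virtual houses of the form $e(i,\cdot)$, but this causes no trouble — all we use is that $e(i,A_i)$ attains the maximum of $v'_i$ over $E$, which holds for any element of $\mathsf{Top}_i(E)$. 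I do not anticipate any real obstacle here; the lemma is essentially a direct translation of the stopping condition through the definition of $v'_i$, and the proof is short.
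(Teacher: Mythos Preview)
Your proposal is correct and follows essentially the same approach as the paper: use that the returned matching lives in $E'$ so $e(i,A_i)\in\mathsf{Top}_i(E)$ for every $i$, note $e(j,A_j)\in E$, and translate $v'_i(e(i,A_i))\ge v'_i(e(j,A_j))$ into the WEF inequality. The paper's proof is terser and omits the bookkeeping discussion, but the argument is the same.
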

\begin{proof}
    By the design of the algorithm, for all $i\in N$ we have $(i,A_i) \in E'$, which implies that $e(i,A_i) \in \mathsf{Top}_i(E)$, where $E'$ and $E$ refer to the set of edges in the bipartite graph and the set of feasible assignments, right before the allocation $\mathbf{A}$ is returned.
    For any other agent $j\neq i$, since $e(j,A_j) \in E$, we have $v'_i(e(i,A_i)) \geq v'_i(e(j,A_j))$, which implies $\frac{v_i(A_i)}{w_i} \geq \frac{v_i(A_j)}{w_j}$.
    Therefore the allocation is WEF.
\end{proof}

Next, we show that if a house $h$ is the (weighted) favourite for both agents $i$ and $j$, then they must have the same weight.

\begin{lemma}\label{lemma:same-favourite-same-weight}
Suppose at certain point in time we have $e(i,h)\in \mathsf{Top}_i(E)$ and $e(j,h)\in \mathsf{Top}_j(E)$ for two agents $i,j\in N$ and the same house $h$, then $w_i=w_j$.
\end{lemma}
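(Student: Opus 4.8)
The plan is to exploit the two $\mathsf{Top}$-membership hypotheses in tandem, each time evaluating one agent's weighted favourite virtual house against the other agent's. First note that $\mathsf{Top}_i(E)\subseteq E$ and $\mathsf{Top}_j(E)\subseteq E$, so both $e(i,h)$ and $e(j,h)$ currently lie in $E$. Since $e(i,h)\in\mathsf{Top}_i(E)$ and $e(j,h)\in E$, the maximality defining $\mathsf{Top}_i$ gives $v'_i(e(i,h))\ge v'_i(e(j,h))$, which by the definition of $v'_i$ unfolds to $\frac{v_i(h)}{w_i}\ge\frac{v_i(h)}{w_j}$. Symmetrically, from $e(j,h)\in\mathsf{Top}_j(E)$ and $e(i,h)\in E$ we get $\frac{v_j(h)}{w_j}\ge\frac{v_j(h)}{w_i}$.

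Once the relevant utilities are known to be positive, the conclusion is immediate: cancelling $v_i(h)>0$ from the first inequality yields $\frac{1}{w_i}\ge\frac{1}{w_j}$, hence $w_i\le w_j$ (the weights are positive); cancelling $v_j(h)>0$ from the second yields $w_j\le w_i$; together $w_i=w_j$. It is also worth recording the immediate consequence that $w_i=w_j$ makes $v'_i(e(i,h))=v'_i(e(j,h))$ and $v'_j(e(j,h))=v'_j(e(i,h))$, so a shared weighted favourite house is automatically a favourite \emph{in both directions}, i.e. $e(j,h)\in\mathsf{Top}_i(E)$ and $e(i,h)\in\mathsf{Top}_j(E)$ as well; this is the form in which the later matching arguments will want to use it.

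The step that needs the most care is the degenerate case in which one of $v_i(h),v_j(h)$ is zero, where the cancellation above is unavailable. If, say, $v_j(h)=0$, then $v'_j(e(j,h))=0$ is the maximum of $v'_j$ over $E$, so $v_j$ is identically zero on every house still reachable through $E$; such an agent is weighted envy-free under every outcome and imposes no real constraint on the allocation. I would dispose of this case either by the standard house-allocation convention that utilities are strictly positive, or by peeling off any all-zero agent, running the algorithm on the remaining instance, and assigning her an arbitrary leftover house at the end — so that the lemma is only ever invoked with $v_i(h),v_j(h)>0$, where the two-inequality argument applies verbatim. Modulo this reduction, nothing further is needed.
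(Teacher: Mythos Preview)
Your argument is essentially the paper's: the paper proceeds by contradiction, assuming without loss of generality that $w_i<w_j$ and observing that then $v'_j(e(i,h))=\frac{v_j(h)}{w_i}>\frac{v_j(h)}{w_j}=v'_j(e(j,h))$, which contradicts $e(j,h)\in\mathsf{Top}_j(E)$ since $e(i,h)\in E$; you instead extract the two non-strict inequalities and cancel. You are in fact more careful than the paper here: its strict inequality tacitly requires $v_j(h)>0$, and your observation that the statement can fail when an agent values every remaining house at zero (together with the proposed preprocessing fix) addresses a degenerate case the paper simply glosses over.
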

\begin{proof}
Suppose otherwise, and we assume that $w_i<w_j$.
Then, we have 
\begin{equation*}
    v'_j(e(i,h)) = \frac{v_j(h)}{w_i}>\frac{v_j(h)}{w_j} = v'_j(e(j,h)),
\end{equation*}
which contradicts with $e(i,h)\in E$ and $e(j,h)\in \mathsf{Top}_j(E)$.
\end{proof}

The next lemma is crucial for showing the correctness of the algorithm, which states that all edges removed from $E$ cannot appear in any WEF allocation.

\begin{lemma}\label{lemma:safe-removal-of-edges}
At any point in time, for all $e(i,h) \notin E$, agent $i$ cannot be assigned house $h$ in any WEF allocation.
In other words, for every WEF allocation (if there is any), all $n$ assignments in the allocation appear in $E$.
\end{lemma}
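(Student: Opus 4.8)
The statement says every edge removed from $E$ is "safe" to remove, i.e. it can never appear in a WEF allocation. Edges are removed in two places: line~5 (inside the inner while-loop, removing $\mathsf{Top}_i(E)$ when it contains no edge of the form $e(i,\cdot)$) and line~12 (removing all edges incident to a minimal Hall violator $Z$). I would argue by induction on the number of edge-removal steps. The induction hypothesis is precisely the claim: at the start of a removal step, every edge already deleted cannot appear in any WEF allocation — equivalently, for any WEF allocation $\mathbf{A}$, all $n$ of its assignments lie in the current $E$. I must show each of the two removal operations preserves this.

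Let me sketch the proof planning carefully. First, the base case: before any removals, $E$ contains all $n\cdot m$ possible assignments, so the hypothesis holds vacuously. For the inductive step on a line-5 removal: suppose agent $i$ has $\mathsf{Top}_i(E)\cap E_i=\emptyset$, and let $\mathbf{A}$ be any WEF allocation. By the induction hypothesis $e(i,A_i)\in E$, hence $v'_i(e(i,A_i)) \le \max_{e\in E} v'_i(e)$, with equality iff $e(i,A_i)\in\mathsf{Top}_i(E)$; but $e(i,A_i)\in E_i$, so since $\mathsf{Top}_i(E)\cap E_i=\emptyset$ we have $v'_i(e(i,A_i)) < \max_{e\in E}v'_i(e)$. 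Now take any edge $e(j,h)\in\mathsf{Top}_i(E)$ that we are about to delete; I claim $e(j,h)$ cannot be an assignment of $\mathbf{A}$ (i.e. $A_j\neq h$). If it were, then WEF of $\mathbf{A}$ for agent $i$ towards $j$ gives $v'_i(e(i,A_i))\ge v'_i(e(j,A_j)) = v'_i(e(j,h)) = \max_{e\in E} v'_i(e)$, contradicting the strict inequality just derived. Hence deleting $\mathsf{Top}_i(E)$ removes no assignment of any WEF allocation, and the hypothesis is preserved.

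For the line-12 removal: we reach this line when, after the inner loop, every agent $k$ has $\mathsf{Top}_k(E)\cap E_k\neq\emptyset$, yet the bipartite graph $G=(N,H,E')$ (with $(k,h)\in E'$ iff $e(k,h)\in\mathsf{Top}_k(E)$) has no $N$-saturating matching, so by Hall's theorem there is a minimal Hall violator $Z\subseteq N$ with $|Z|>|H(Z)|$. Let $\mathbf{A}$ be any WEF allocation. The key sub-claim is that for every $k\in Z$, the house $A_k$ lies in $H(Z)$ — once this is established, the $|Z|$ distinct houses $\{A_k : k\in Z\}$ would all lie in $H(Z)$, contradicting $|Z|>|H(Z)|$, so in fact no WEF allocation can exist at this stage, which makes removing \emph{any} edges incident to $Z$ trivially safe (the hypothesis "all assignments of every WEF allocation are in $E$" holds vacuously). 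To prove the sub-claim: fix $k\in Z$. Since $\mathbf{A}$ is WEF and (by the induction hypothesis) $e(k,A_k)\in E$, we have $e(k,A_k)\in\mathsf{Top}_k(E)$ — because for any agent, its own received assignment must weakly dominate every other assignment still in $E$, in particular the ones in $\mathsf{Top}_k(E)$, forcing $v'_k(e(k,A_k)) = \max_{e\in E}v'_k(e)$. Hence $(k,A_k)\in E'$, i.e. $A_k$ is a neighbour of $k$ in $G$, so $A_k\in H(Z)$.

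The main obstacle, and the step deserving the most care, is the Hall-violator case — specifically nailing down that $e(k,A_k)\in\mathsf{Top}_k(E)$ for each $k\in Z$. This needs the observation that in a WEF allocation the bundle an agent \emph{receives} is a weighted-favourite among all assignments that survive in $E$ (using the induction hypothesis that the other agents' assignments are in $E$), which is exactly the content already isolated in the proof of Lemma~\ref{lemma:returned-allocation-is-WEF}; I would reuse that reasoning. A secondary subtlety worth a sentence is that minimality of the Hall violator is not actually needed for safety — any Hall violator works — but the algorithm uses a minimal one for efficiency, so I would remark that the correctness argument does not rely on minimality. With these pieces, induction closes the proof, and the "in other words" reformulation in the statement is immediate.
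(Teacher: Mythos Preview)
Your line-5 case is fine and matches the paper. The gap is in the Hall-violator case. You assert that for every $k\in Z$ we have $e(k,A_k)\in\mathsf{Top}_k(E)$, justifying this by ``its own received assignment must weakly dominate every other assignment still in $E$''. But WEF only gives $v'_k(e(k,A_k))\ge v'_k(e(j,A_j))$ for each \emph{actual} assignment $e(j,A_j)$ of $\mathbf{A}$; it says nothing about virtual houses $e(j,h)\in E$ with $h\neq A_j$. Concretely, suppose $e(k,h^*)\in E$ for some house $h^*$ with $v_k(h^*)>v_k(A_k)$, where either $h^*$ is unallocated (possible since $m\ge n$) or $h^*=A_j$ for some agent $j$ with $w_j$ large enough that $\frac{v_k(h^*)}{w_j}\le\frac{v_k(A_k)}{w_k}$. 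Then $v'_k(e(k,h^*))=\frac{v_k(h^*)}{w_k}>\frac{v_k(A_k)}{w_k}=v'_k(e(k,A_k))$, so $e(k,A_k)\notin\mathsf{Top}_k(E)$, yet $\mathbf{A}$ can still be WEF. Your pigeonhole argument ($|Z|$ distinct houses forced into $H(Z)$) therefore does not go through, and your conclusion that ``no WEF allocation can exist at this stage'' is unwarranted.

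The paper confronts exactly this difficulty. It sets $X=\{j\in Z:(j,A_j)\in E'\}$, notes that the specific agent $i$ whose edge is being removed lies in $X$ (since $(i,A_i)\in E'$ by the removal rule), and shows $X\subsetneq Z$ via the Hall inequality. It then \emph{uses minimality of $Z$} to get $|H(Z\setminus X)|\ge|Z\setminus X|$, which combined with the $|X|$ distinct neighbours $\{A_j:j\in X\}$ forces some $l\in Z\setminus X$ to be adjacent in $E'$ to some $A_j$ with $j\in X$; invoking Lemma~\ref{lemma:same-favourite-same-weight} gives $w_l=w_j$, and then $e(l,A_l)\notin\mathsf{Top}_l(E)$ yields the WEF contradiction for agent $l$. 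So your side remark that ``minimality of the Hall violator is not actually needed for safety'' is also incorrect: minimality is precisely what rescues the argument once one accepts that not every $k\in Z$ need have $A_k\in H(Z)$.
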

\begin{proof}
We prove the lemma by contradiction: assume that there exists a WEF allocation $\mathbf{A}$ and $e(i,A_i)$ is removed from $E$ by our algorithm.
We further assume that $e(i,A_i)$ is the first assignment in $\mathbf{A}$ that is removed.
Note that Algorithm~\ref{Alg1} removes assignments under two circumstances: in line 5 and line 11. We will prove the lemma in both cases. 

Suppose that $e(i,A_i)$ is removed due to the operation in line 5, i.e., there exists $j\neq i$ such that $\mathsf{Top}_j(E)\cap E_j = \emptyset$ and $e(i,A_i)\in \mathsf{Top}_j(E)$.
Since $e(i,A_i)$ is the first assignment in $\mathbf{A}$ that is removed, we have $e(j,A_j)\in E$ before $e(i,A_i)$ is removed.
Since $e(i,A_i)\in \mathsf{Top}_j(E)$ but $e(j,A_j)\notin \mathsf{Top}_j(E)$, we have
\begin{equation*}
    \frac{v_j(A_i)}{w_j} = v'_j(e(j,A_j)) < v'_j(e(i,A_i)) = \frac{v_j(A_i)}{w_i},
\end{equation*}
which contradicts with $\mathbf{A}$ being WEF.

Next we consider the case when $e(i,A_i)$ is removed due to the operation in line 11, i.e., $i$ is contained in some minimal Hall violator $Z$, and $(i,A_i)\in E'$.
Let $X = \{ j\in Z: (j,A_j)\in E' \}$.
By definition we have $i\in X\subseteq Z$.
Moreover we have $Z \setminus X \neq \emptyset$ because otherwise 
\begin{equation*}
    |H(Z)| = |H(X)| \geq |X| = |Z|,
\end{equation*}
contradicting with $Z$ being a Hall violator.
Moreover, we know that $Z\setminus X$ is not a Hall violator, as $Z$ is a minimal Hall violator.
Therefore we have $H(Z\setminus X) \geq |Z \setminus X|$.
Together with the fact that each agent $j\in X$ has a unique neighbor $A_j\in H(X)$, we conclude that there must exist $l\in Z\setminus X$ and $j\in X$ such that $(l,A_j)\in E'$, which implies that $e(l, A_j)\in \mathsf{Top}_l(E)$.
Since we also have $e(j, A_j)\in \mathsf{Top}_j(E)$, Lemma~\ref{lemma:same-favourite-same-weight} implies that $w_j = w_l$.
Since $e(i,A_i)$ is the first assignment that is removed from $E$, we have $e(l,A_l)\in E$ at the moment.
Moreover, since $l\notin X$, by definition of $X$ we have $(l,A_l)\notin E'$, i.e., $e(l,A_l)\notin \mathsf{Top}_l(E)$, which implies
\begin{equation*}
    \frac{v_l(A_l)}{w_l} = v'_l(e(l,A_l)) < v'_l(e(l,A_j)) = \frac{v_l(A_j)}{w_l}  = \frac{v_l(A_j)}{w_j},
\end{equation*}
which contradicts with $\mathbf{A}$ being WEF.
\end{proof}

Finally, we prove the main result of this section.

\begin{theorem} \label{theorem:decide-WEF}
Algorithm~\ref{Alg1} is a polynomial time algorithm that decides whether weighted envy-free allocations exist and, if so, computes one such allocation.
\end{theorem}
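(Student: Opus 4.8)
The plan is to establish three things: correctness of a positive answer, correctness of a negative answer, and polynomial running time. Correctness of a positive answer is already done: if the algorithm returns an allocation $\mathbf{A}$ in line~8, then by Lemma~\ref{lemma:returned-allocation-is-WEF} it is WEF, and line~7 guarantees $\mathbf{A}$ is a genuine allocation (an $N$-saturating matching, hence $n$ distinct houses). So the substantive content is the negative answer and the complexity.

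For the negative answer, I would argue the contrapositive: if a WEF allocation $\mathbf{A}^\star$ exists, the algorithm never reaches the ``No such allocation'' output. By Lemma~\ref{lemma:safe-removal-of-edges}, every one of the $n$ assignments $e(i, A^\star_i)$ is preserved in $E$ throughout the entire execution; in particular these $n$ distinct virtual houses always remain, so the outer loop condition $|E| \geq n$ is never violated and we cannot exit the \textbf{while} loop that way. It remains to rule out the other exit: that the loop runs forever. Here I would combine two observations. First, each iteration of the outer loop that does not return must execute line~11, which removes at least one edge --- I need to check this: a minimal Hall violator $Z$ is nonempty, and if $Z = \{i\}$ then $i$ has no neighbor in $G$, meaning $\mathsf{Top}_i(E) \cap E_i = \emptyset$, which is impossible because the inner \textbf{while} loop of lines~4--6 has just terminated; so $|Z| \geq 2$, and since $Z$ is a Hall violator with $|H(Z)| < |Z|$ but (minimality) every proper subset satisfies Hall, each vertex of $Z$ has at least one incident edge in $E'$, so line~11 deletes $\geq 1$ edge. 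Since $|E|$ starts at $nm$ and strictly decreases at every non-terminating outer iteration (and also weakly decreases in lines~4--6), the outer loop runs at most $nm$ times. Hence the algorithm terminates, and by the argument above it must terminate via line~8 whenever a WEF allocation exists --- so outputting ``No such allocation'' is correct.

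For the running time: the outer loop runs $O(nm)$ times; inside each iteration, the inner loop of lines~4--6 also runs $O(nm)$ times (each pass removes at least one edge), and a single pass --- computing $\mathsf{Top}_i(E)$ for all $i$, testing membership, updating $E$ --- costs $\mathrm{poly}(n,m)$. Building $G$, running a maximum-matching routine to test for an $N$-saturating matching, and (if it fails) extracting a minimal Hall violator via the subroutine of Gan et al.~\cite{gan2019envy} are all polynomial. The preprocessing in lines~1--2 is $O(nm)$ evaluations of $v'_i$. Multiplying through gives a polynomial bound. I would present the complexity bookkeeping at a coarse level rather than optimizing constants.

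The main obstacle is the termination/progress argument: one must verify carefully that line~11 genuinely makes progress (removes at least one edge), which hinges on the fact that after lines~4--6 every agent has nonzero degree in $G$ and hence every minimal Hall violator has size at least two with every member incident to some edge of $E'$. Once that is nailed down, everything else is routine bookkeeping built on the already-proved Lemmas~\ref{lemma:returned-allocation-is-WEF} and~\ref{lemma:safe-removal-of-edges}.
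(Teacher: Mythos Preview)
Your proposal is correct and follows essentially the same approach as the paper's proof: correctness via Lemmas~\ref{lemma:returned-allocation-is-WEF} and~\ref{lemma:safe-removal-of-edges}, and termination/running time via counting edge deletions from $E$. Your treatment is in fact slightly more careful on one point --- you explicitly verify that line~11 makes progress (via the $|Z|\geq 2$ argument), which the paper asserts without justification --- while the paper in turn gives a tighter $O(n^3 m^2)$ running-time bound where you settle for a generic polynomial.
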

\begin{proof}
We first prove the correctness of our algorithm, then show that it runs in polynomial time.

By Lemma~\ref{lemma:returned-allocation-is-WEF}, if some allocation is returned by the algorithm, then it is WEF. 
Thus it suffices to show that if the algorithm terminates with no allocation returned, then the instance does not admit any WEF allocation.
Assume for contradiction that there exists a WEF allocation $\mathbf{A}$.
By Lemma~\ref{lemma:safe-removal-of-edges}, for all $i\in N$, we have $e(i,A_i) \in E$ throughout the whole algorithm, which implies $|E| \geq n$.
Therefore the algorithm will not output ``No such allocation'', which is a contradiction.

Next we show that the algorithm runs in $O(n^3 m^2)$ time.
Note that in each while-loop, if the algorithm does not terminate, at least one assignment will be removed from $E$.
We show that if $k$ assignments are removed from $E$ in some while-loop, then the while-loop executes in $O(kn^2m)$ time.
Since $|E| = O(nm)$ and the algorithm terminates when $|E|<n$, the above argument implies that the algorithm runs in $O(n^3 m^2)$ time.

We first focus on lines 4 - 5.
For each agent $i\in N$, it takes $O(|E|)$ time to check whether $\mathsf{Top}_i(E) \cap E_i = \emptyset$, by scanning $E$ and recording the maximum value of assignments in $E\cap E_i$ and that in $E\setminus E_i$.
Therefore in $O(n\cdot |E|) = O(n^2 m)$ time we can either break the while-loop in line 4, or remove at least one assignment.
Thus lines 4 - 5 end in $O(k n^2 m)$ time.
Constructing the bipartite graph in line 6 takes $O(n^2 m)$ time (by identifying $\mathsf{Top}_i(E)\cap E_i$ for all $i\in N$).
Computing the maximum (cardinality) matching takes $O(n\cdot |E'|) = O(n^2 m)$ time.
If the maximum matching is not $N$-saturating, we can find a minimal Hall violator in $O(n\cdot |E'|) = O(n^2 m)$ time by constructing the residual graph (see Lemma 2.1 of~\cite{gan2019envy} for the details), after which we remove more assignments.
\end{proof}

We further show that if an allocation is returned then it must be Pareto optimal among all WEF allocations.

\begin{corollary}
The allocation returned by Algorithm~\ref{Alg1} is Pareto optimal among all weighted envy-free allocations.
\end{corollary}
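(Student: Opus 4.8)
The plan is to show that the returned allocation $\mathbf{A}$ admits no Pareto improvement within the class of WEF allocations. Suppose for contradiction that there is a WEF allocation $\mathbf{A}'$ with $v_i(A_i') \geq v_i(A_i)$ for all $i\in N$ and $v_j(A_j') > v_j(A_j)$ for some $j\in N$. The key structural fact I would exploit is the one already used to prove Lemma~\ref{lemma:returned-allocation-is-WEF}: at the moment $\mathbf{A}$ is returned, every assignment $e(i,A_i)$ lies in $\mathsf{Top}_i(E)$, i.e. $e(i,A_i)$ maximizes $v'_i$ over the current set $E$. By Lemma~\ref{lemma:safe-removal-of-edges}, the WEF allocation $\mathbf{A}'$ also satisfies $e(i,A_i')\in E$ for every $i$. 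Hence for every agent $i$ we have $v'_i(e(i,A_i)) \geq v'_i(e(i,A_i'))$, which unpacks to $\frac{v_i(A_i)}{w_i} \geq \frac{v_i(A_i')}{w_i}$, i.e. $v_i(A_i) \geq v_i(A_i')$.

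Combining this with the assumed Pareto dominance $v_i(A_i')\geq v_i(A_i)$ for all $i$, I get $v_i(A_i') = v_i(A_i)$ for every $i\in N$. This directly contradicts the strict inequality $v_j(A_j') > v_j(A_j)$ for the witnessing agent $j$. Therefore no such $\mathbf{A}'$ exists, and $\mathbf{A}$ is Pareto optimal among all WEF allocations.

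I would be careful about one subtlety when writing this up: the sets $E$ and $E'$ referenced above must be taken to be the ones in effect at the exact iteration when the $N$-saturating matching is found and the allocation returned (the same convention as in Lemma~\ref{lemma:returned-allocation-is-WEF}); Lemma~\ref{lemma:safe-removal-of-edges} is stated "at any point in time," so it applies in particular at that iteration, guaranteeing $e(i,A_i')\in E$ there. I do not anticipate a genuine obstacle here — the argument is essentially a two-line consequence of the $\mathsf{Top}$-optimality of the returned assignments plus the safe-removal lemma — so the only thing to get right is bookkeeping about which snapshot of $E$ is meant and making explicit that the $w_i$ in the denominators cancel cleanly (they are fixed positive weights, so dividing the inequality $v'_i(e(i,A_i)) \geq v'_i(e(i,A_i'))$ through is legitimate).
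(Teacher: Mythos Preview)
Your proposal is correct and follows essentially the same approach as the paper: both rely on the two facts that $e(i,A_i)\in\mathsf{Top}_i(E)$ at the moment of return and that, by Lemma~\ref{lemma:safe-removal-of-edges}, every assignment of the competing WEF allocation remains in $E$. The paper's write-up is marginally more direct in that it focuses immediately on the single agent with strict improvement and derives $(i,A_i)\notin E'$ as the contradiction, whereas you first deduce $v_i(A_i)\geq v_i(A_i')$ for all $i$ and then collapse to equality; the underlying argument is the same.
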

\begin{proof}
For the weighted house allocation problem $\mathcal{I}=(N,H,W,\mathbf{v})$, suppose it admits weighted envy-free allocations, and Algorithm~\ref{Alg1} returns an allocation $\mathbf{A}$. 
Assume for contradiction that there exists another allocation $\mathbf{B}$, such that $v_i(B_i)>v_i(A_i)$ for some $i\in N$ and $v_j(B_j)\geq v_j(A_j)$ for all $j \in N\setminus \{i\}$.
By Lemma~\ref{lemma:safe-removal-of-edges}, all assignments in $\mathbf{A}$ and $\mathbf{B}$ are preserved in $E$.
Since the assignment $e(i,B_i)$ ranks before $e(i,A_i)$ in $E$ under $v'_i$, we have $(i,A_i)\notin E'$, which contradicts the fact that $A_i$ is assigned to agent $i$.
\end{proof}


\section{Weighted Envy-Free Allocation with Subsidy}

The aforementioned result of Section~\ref{section:Computation of Weighted Envy-Free Allocations} implies that we can decide in polynomial time whether there exist WEF allocations with zero subsidy. 
When WEF allocations do not exist, it may still be possible to achieve weighted envy-freeness if a third party is willing to subsidize the envious agents.
As observed in existing works~\cite{SAGT2019,conf/sigecom/BrustleDNSV20}, not every allocation can be turned envy-free by introducing subsidies, i.e., is envy-freeable.
However, for the unweighted case we can always find allocations that are envy-freeable (see Observation~\ref{observation:EFable exist}).
In this section, our goal is to characterize WEFable allocations and, given a WEFable allocation, to find the minimum amount of envy-eliminating subsidy vector.
One of our main discovery is that WEFable allocations do not always exist, even in some very special cases.

\subsection{Weighted Envy Graph}
\label{section:Weighted Envy Graph}

Envy graph, proposed by Halpern and Shah~\cite{SAGT2019} for the unweighted version of the problem, has been an essential tool for studying envy-freeable allocations. 
In this section, we introduce the weighted envy graph and use it to characterize WEFable allocations, generalizing the results of Halpern and Shah~\cite{SAGT2019} to weighted allocation problems.

\paragraph{Weighted Envy Graph.}
Given any allocation $\mathbf{A} = (A_1,A_2,\ldots,A_n)$, we construct a weighted directed complete graph $G_\mathbf{A}$ in which the directed edge $(i,j)$ has weight
\begin{equation*}
    w(i,j) = \frac{v_i(A_j)}{w_j} - \frac{v_i(A_i)}{w_i}.
\end{equation*}
Note that $w(i,j)$ is the amount of envy agent $i$ has for agent $j$, which can be negative if agent $i$ does not envy agent $j$ for having some subsidy.
We further define a path $\mathcal{L}$ in $G_\mathbf{A}$ as a sequence of nodes $(i_0,i_1,i_2, \dots, i_k)$.
The path is a \textit{cycle} if $i_0=i_k$.  
Note that by definition, $w(i,i)=0$ for each $i \in N$.
Let $w(\mathcal{L})$ denote the sum of the weight of edges in the path:
\begin{equation*}
    w(\mathcal{L})=\sum_{t=0}^{k-1}w(i_t,i_{t+1}).
\end{equation*}
Given any pair of agents $i,j \in N$, let $\ell(i,j)$ be the maximum weight of any path that starts at $i$ and ends at $j$.
Let $\ell(i)=\max_{j \in N}\ell(i,j)$  be the maximum weight of any path starting at $i$.

\subsection{Characterizing WEFable Allocations}
\label{section:Weighted-Envy-Freeable-Allocations}

In this section, our goal is to characterize WEFable allocations and, given a weighted envy-freeable allocation, to find the minimum amount of envy-eliminating subsidy vector.
We remark that our characterization holds for any allocation (not necessarily house allocation), in which an agent can receive any number of houses.

\begin{lemma}
\label{lemma:no-positive-weight-cycles}
An allocation $\mathbf{A}$ is WEFable if and only if the corresponding weighted envy graph $G_{\mathbf{A}}$ has no positive-weight cycles.
\end{lemma}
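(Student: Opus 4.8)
The plan is to mimic the classical argument of Halpern and Shah for the unweighted case, adapted to the weighted envy graph $G_{\mathbf{A}}$ defined above. The key observation is that a subsidy vector $\mathbf{P}$ is envy-eliminating precisely when, for every ordered pair $(i,j)$, we have $\frac{v_i(A_i)+p_i}{w_i} \ge \frac{v_i(A_j)+p_j}{w_j}$, which rearranges to $\frac{p_i}{w_i} - \frac{p_j}{w_j} \ge \frac{v_i(A_j)}{w_j} - \frac{v_i(A_i)}{w_i} = w(i,j)$. So if I set $q_i := p_i/w_i$, the problem reduces to: find nonnegative reals $p_i$ (equivalently $q_i \ge 0$) such that $q_i - q_j \ge w(i,j)$ for all $i,j$. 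This is exactly a system of difference constraints, which is feasible if and only if the constraint graph (here $G_{\mathbf{A}}$ with edge weights $w(i,j)$) has no positive-weight cycle — the standard shortest-path / potential-function characterization.

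For the forward direction (WEFable $\Rightarrow$ no positive cycle), I would suppose $\mathbf{P} \in \mathcal{P}(\mathbf{A})$ and take any cycle $\mathcal{L} = (i_0, i_1, \dots, i_k = i_0)$. Summing the inequalities $q_{i_t} - q_{i_{t+1}} \ge w(i_t, i_{t+1})$ over $t = 0, \dots, k-1$ telescopes the left side to $0$, giving $0 \ge w(\mathcal{L})$, so no cycle has positive weight. For the converse, assuming $G_{\mathbf{A}}$ has no positive-weight cycle, the quantities $\ell(i) = \max_{j} \ell(i,j)$ (maximum weight of a path starting at $i$) are well-defined and finite — this is where the no-positive-cycle hypothesis is essential, since otherwise path weights would be unbounded. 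Note $\ell(i) \ge w(i,i) = 0$ since the trivial one-node path from $i$ to $i$ has weight $0$, so setting $p_i := w_i \cdot \ell(i)$ gives a nonnegative subsidy vector. I then need to verify $q_i - q_j = \ell(i) - \ell(j) \ge w(i,j)$: indeed, any maximum-weight path from $j$ can be prepended with the edge $(i,j)$ to form a path from $i$, so $\ell(i) \ge w(i,j) + \ell(j)$, which is exactly what is needed. Hence $(\mathbf{A}, \mathbf{P})$ is WEF.

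The main subtlety — and the step I would be most careful about — is confirming that $\ell(i)$ is finite and well-defined under the no-positive-cycle assumption: one must argue that it suffices to consider simple paths (paths with no repeated vertices), since any cycle inserted into a path has weight $\le 0$ and so can be removed without decreasing total weight, and there are only finitely many simple paths. A second point worth stating explicitly is the change of variables $q_i = p_i / w_i$ and the fact that nonnegativity of $p_i$ is equivalent to nonnegativity of $q_i$ (using $w_i > 0$); this is what lets the difference-constraint machinery go through with the right notion of ``nonnegative subsidy.'' Everything else is routine telescoping and a one-line path-extension inequality. I would also remark in passing that the construction $p_i = w_i \ell(i)$ will later be shown to be the minimum envy-eliminating subsidy (analogous to the unweighted result), though that is not needed for this lemma.
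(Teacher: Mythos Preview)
Your proposal is correct and follows essentially the same argument as the paper: both directions use the telescoping sum over a cycle for the ``only if'' part and the construction $p_i = w_i \cdot \ell(i)$ together with the path-extension inequality $\ell(i) \ge w(i,j) + \ell(j)$ for the ``if'' part. Your explicit change of variables $q_i = p_i/w_i$ and your attention to why $\ell(i)$ is finite (restricting to simple paths) are nice clarifications that the paper leaves implicit, but the underlying proof is the same.
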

\begin{proof}
We first prove the ``if'' direction, which states that an allocation $\mathbf{A}$ is WEFable if the corresponding weighted envy graph $G_{\mathbf{A}}$ has no positive-weight cycles.
Suppose $G_{\mathbf{A}}$ has no positive-weight cycles.
Recall that we define $\ell(i)$ to be the maximum weight of any path starting from $i$ in $G_{\mathbf{A}}$.
Let $p_i = w_i \cdot \ell(i) \geq 0$ for each $i \in N$, resulting in a subsidy vector $\mathbf{P}$.
By the definition of $w(\mathcal{L})$ and $\ell(i)$, for any two agents $i,j \in N$, we have
   \begin{equation*}
        \frac{p_i}{w_i} = \ell(i) \geq w(i,j) + \ell(j) = \frac{v_i(A_j)}{w_j} - \frac{v_i(A_i)}{w_i} + \frac{p_j}{w_j},
    \end{equation*}
       where the first inequality holds since edge $(i,j)$ together with a path starting from agent $j$ gives a path starting from agent $i$.
  Reordering the above equation implies that agent $i$ is weighted envy-free towards agent $j$:
    \begin{equation*}
         \frac{v_i(A_i) + p_i}{w_i} \geq \frac{v_i(A_j) + p_j}{w_j}.
    \end{equation*}     
It means that the outcome $(\mathbf{A}, \mathbf{P})$ is WEF.
Thus, the allocation $\mathbf{A}$ is WEFable.

\smallskip

Next we prove the ``only if'' direction, which states that if an allocation $\mathbf{A}$ is WEFable then the corresponding weighted envy graph $G_{\mathbf{A}}$ has no positive-weight cycles.
Fix any cycle $\mathcal{C} = (i_0,i_1,\ldots,i_{k})$, where $i_0 = i_{k}$, by weighted envy-freeness of $(\mathbf{A},\mathbf{P})$, for any $i_t$ and $i_{t+1}$, where $t \in \{0, 1,\ldots,k-1\}$, we have
    \begin{equation*}
        \frac{v_{i_t}(A_{i_t}) + p_{i_t}}{w_{i_t}} \geq \frac{v_{i_t}(A_{i_{t+1}}) + p_{i_{t+1}}}{w_{i_{t+1}}}. 
    \end{equation*}
Reordering the above inequality gives
    \begin{equation*}
        w(i_t,i_{t+1}) = \frac{v_{i_t}(A_{i_{t+1}})}{w_{i_{t+1}}} - \frac{v_{i_t}(A_{i_t})}{w_{i_t}} \leq \frac{p_{i_{t}}}{w_{i_t}} - \frac{p_{i_{t+1}}}{w_{i_{t+1}}}, 
    \end{equation*}
    which implies that the total weight of the cycle is non-positive:
    \begin{equation*}
        w(\mathcal{C}) = \sum_{t=0}^{k-1} w(i_t,i_{t+1}) \leq \sum_{t=0}^{k-1}\left( \frac{p_{i_{t}}}{w_{i_t}} - \frac{p_{i_{t+1}}}{w_{i_{t+1}}} \right) = 0.
        \qedhere
    \end{equation*}
\end{proof}
Note that Lemma~\ref{lemma:no-positive-weight-cycles} provides a way to check if a given allocation $\mathbf{A}$ is WEFable in polynomial time. 
This can be done by using the Floyd-Warshall algorithm\footnote{As introduced in~\cite{floyd1962algorithm}, it is an algorithm for finding the shortest paths in a directed weighted graph with positive or negative edge weights.} to check whether the corresponding weighted envy graph $G_{\mathbf{A}}$ has any positive-weight cycles.
Constructing the weighted envy graph takes $O(n^2)$ time\footnote{For general allocation problems, constructing the weighted envy graph takes $O(n^2+nm)$ time.} and the Floyd-Marshall algorithm runs in $O(n^3)$ time.

\begin{corollary}[Checking WEFable]
\label{corollary:polytime-checking-WEFable}
Given an allocation $\mathbf{A}$ for an instance $\mathcal{I}$, it takes $O(n^3)$ time to check whether it is WEFable.
\end{corollary}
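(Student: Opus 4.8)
The plan is to obtain the corollary directly from Lemma~\ref{lemma:no-positive-weight-cycles}, which reduces the question ``is $\mathbf{A}$ WEFable?'' to the question ``does the weighted envy graph $G_{\mathbf{A}}$ contain a positive-weight cycle?''. So the task splits into two steps: constructing $G_{\mathbf{A}}$, and testing it for positive-weight cycles, and it suffices to show each step runs within the claimed budget.

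First I would bound the cost of building $G_{\mathbf{A}}$. Its vertex set is $N$ and it is a complete directed graph, hence it has $n(n-1)$ edges (the zero-weight self-loops may be discarded). For each ordered pair $(i,j)$ the weight $w(i,j) = \frac{v_i(A_j)}{w_j} - \frac{v_i(A_i)}{w_i}$ is a constant-time computation once the allocation $\mathbf{A}$ is in hand, since each of $v_i(A_j)$, $v_i(A_i)$, $w_i$, $w_j$ is fetched in $O(1)$ time in the house allocation setting (where every $A_k$ is a single house). Therefore $G_{\mathbf{A}}$ is assembled in $O(n^2)$ time. Second, I would test $G_{\mathbf{A}}$ for positive-weight cycles by negating all edge weights, i.e.\ forming the auxiliary graph $\widehat{G}$ on the same vertices with $\widehat{w}(i,j) = -w(i,j)$; then $G_{\mathbf{A}}$ has a positive-weight cycle if and only if $\widehat{G}$ has a negative-weight cycle. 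Running the Floyd--Warshall algorithm on $\widehat{G}$ computes an all-pairs shortest-walk distance matrix, and a negative-weight cycle is present precisely when some diagonal entry of that matrix ends up negative; Floyd--Warshall runs in $O(n^3)$ time on an $n$-vertex graph. Combining the two steps, the total running time is $O(n^2) + O(n^3) = O(n^3)$, and by Lemma~\ref{lemma:no-positive-weight-cycles} the outcome of the test is exactly whether $\mathbf{A}$ is WEFable.

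There is essentially no hard part here: the only point requiring a little care is that Floyd--Warshall (unlike Dijkstra) handles negative edge weights correctly, which is exactly why it is the appropriate choice, and that a single run suffices to detect a negative cycle via the sign of the diagonal entries. I would also note that the $O(n^3)$ bound dominates for any fixed allocation independently of $m$, so the check is polynomial in the size of the instance. (For the more general allocation model an agent may hold a bundle, so each $v_i(A_j)$ is a sum over the items in $A_j$; this raises the construction cost to $O(n^2 + nm)$ as remarked in the footnote, but the overall bound remains polynomial.)
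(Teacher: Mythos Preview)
Your proposal is correct and matches the paper's own argument essentially verbatim: reduce to positive-weight cycle detection via Lemma~\ref{lemma:no-positive-weight-cycles}, build $G_{\mathbf{A}}$ in $O(n^2)$, and run Floyd--Warshall in $O(n^3)$. The only cosmetic difference is that you make explicit the negation step to cast the problem as negative-cycle detection, whereas the paper simply cites Floyd--Warshall for positive-weight cycle detection directly.
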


Next we introduce another equivalent characterization of WEFable allocations, which is related a property which we call \emph{permutation resistant}.

\begin{definition}[Permutation Resistant]
\label{definition:Permutation-Resistant}
An allocation $\mathbf{A}$ is permutation resistant if for any permutation $\sigma: [n] \to [n]$, it holds that 
\begin{equation*}
    \sum_{i\in N} \frac{v_i(A_i)}{w_i} \geq \sum_{i\in N} \frac{v_i(A_{\sigma(i)})}{w_{\sigma(i)}}.
\end{equation*}
\end{definition}

Note that on the RHS of the equation, the denominator also changes depending on $\sigma$.
Therefore the house allocation $\mathbf{A}$ that maximizes the LHS of the above equation is not necessarily WEFable (as we will show in the next section).

\begin{lemma}\label{lemma:equivalence-permutation-max}
An allocation $\mathbf{A}$ is WEFable if and only if it is permutation resistant.
\end{lemma}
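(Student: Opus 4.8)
The plan is to prove the equivalence by relating both characterizations to the weighted envy graph $G_{\mathbf{A}}$, using Lemma~\ref{lemma:no-positive-weight-cycles} as the pivot. That is, I would show that $\mathbf{A}$ is permutation resistant if and only if $G_{\mathbf{A}}$ has no positive-weight cycles, and then invoke Lemma~\ref{lemma:no-positive-weight-cycles} to conclude. The bridge between the two is the observation that any permutation $\sigma$ decomposes into disjoint cycles, so the quantity $\sum_{i \in N}\bigl(\frac{v_i(A_{\sigma(i)})}{w_{\sigma(i)}} - \frac{v_i(A_i)}{w_i}\bigr) = \sum_{i\in N} w(i,\sigma(i))$ is exactly the total weight of the collection of directed cycles in $G_{\mathbf{A}}$ induced by $\sigma$ (recall $w(i,i)=0$, so fixed points contribute nothing).

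First I would prove the contrapositive of ``permutation resistant $\Rightarrow$ no positive-weight cycle.'' Suppose $G_{\mathbf{A}}$ has a positive-weight cycle $\mathcal{C} = (i_0, i_1, \ldots, i_k = i_0)$ with distinct internal vertices; since $w(i,i)=0$ we may assume the vertices $i_0,\ldots,i_{k-1}$ are distinct and $k \geq 2$. Define the permutation $\sigma$ that maps $i_t \mapsto i_{t+1}$ along the cycle and fixes every other agent. Then $\sum_{i\in N} \frac{v_i(A_{\sigma(i)})}{w_{\sigma(i)}} - \sum_{i\in N}\frac{v_i(A_i)}{w_i} = \sum_{t=0}^{k-1} w(i_t, i_{t+1}) = w(\mathcal{C}) > 0$, which violates permutation resistance. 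Hence permutation resistant implies no positive-weight cycles.

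Next I would prove the converse: if $G_{\mathbf{A}}$ has no positive-weight cycle, then $\mathbf{A}$ is permutation resistant. Take any permutation $\sigma$ and write its cycle decomposition into disjoint cycles $C_1, \ldots, C_r$ (singletons allowed). Each nontrivial $C_j$ corresponds to a directed cycle in $G_{\mathbf{A}}$ of weight $w(C_j) \leq 0$ by hypothesis, and singletons contribute $0$. Summing over all cycles, $\sum_{i\in N} w(i,\sigma(i)) = \sum_{j=1}^r w(C_j) \leq 0$, which is precisely $\sum_{i\in N}\frac{v_i(A_i)}{w_i} \geq \sum_{i\in N}\frac{v_i(A_{\sigma(i)})}{w_{\sigma(i)}}$. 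Combining both directions with Lemma~\ref{lemma:no-positive-weight-cycles} gives the claimed equivalence.

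I do not expect a serious obstacle here; the argument is essentially the standard cycle-decomposition bookkeeping. The one point requiring a little care is making sure the denominators match up correctly — in $\sum_i \frac{v_i(A_{\sigma(i)})}{w_{\sigma(i)}}$ the term for agent $i$ indeed equals $w(i,\sigma(i)) + \frac{v_i(A_i)}{w_i}$, which follows directly from the definition $w(i,j) = \frac{v_i(A_j)}{w_j} - \frac{v_i(A_i)}{w_i}$ — and being careful that ``cycle'' in Definition of the weighted envy graph allows repeated endpoints but we only need simple cycles, which is exactly what a permutation's cycle decomposition provides. So the main (minor) obstacle is just aligning the indexing conventions; the combinatorial core is immediate.
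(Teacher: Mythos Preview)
Your proposal is correct and follows essentially the same route as the paper: both pivot through Lemma~\ref{lemma:no-positive-weight-cycles} and use the fact that a permutation's cycle decomposition turns the permutation-resistance inequality into a sum of cycle weights in $G_\mathbf{A}$. The only cosmetic difference is that you phrase one direction as a contrapositive, whereas the paper argues it directly; the underlying computation is identical.
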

\begin{proof}
   We first prove the ``only if'' direction, which states that $\mathbf{A}$ is permutation resistant if $\mathbf{A}$ is WEFable.
    By Lemma~\ref{lemma:no-positive-weight-cycles}, $G_\mathbf{A}$ has no positive-weight cycles. Thus for any permutation $\sigma$, the edges $F = \{ (i,j) : j = \sigma(i) \text{ and } j\neq i \}$ form a collection of cycles. Thus
    \begin{equation*}
        \sum_{i\in N} \frac{v_i(A_{\sigma(i)})}{w_{\sigma(i)}} - \sum_{i\in N} \frac{v_i(A_i)}{w_i} = \sum_{(i,j)\in F} w(i,j) \leq 0,
    \end{equation*}
    which implies that the allocation $\mathbf{A}$ is permutation resistant .

    Next, we prove the ``if'' direction, which states that the allocation $\mathbf{A}$ is WEFable if it is permutation resistant.
    Consider any cycle $\mathcal{C} = (i_0,i_1,\ldots,i_k)$ in $G_\mathbf{A}$, where $i_0 = i_k$.
    Define a permutation $\sigma$, in which $\sigma(i_t) = i_{t+1}$ for all $t\in \{0,1,\ldots,k-1\}$ and $\sigma(i) = i$ for all $i\notin \{i_0,i_1,\ldots,i_{k-1}\}$.
    Since $\mathbf{A}$ is permutation resistant, we have
    \begin{equation*}
        w(\mathcal{C}) = \sum_{t=0}^{k-1} w(i_t,i_{t+1}) = \sum_{t=0}^{k-1} \left( \frac{v_{i_t}(A_{\sigma(i_t)})}{w_{\sigma(i_t)}} - \frac{v_{i_t}(A_{i_t})}{w_{i_t}} \right) = \sum_{i\in N} \frac{v_i(A_{\sigma(i)})}{w_{\sigma(i)}} - \sum_{i\in N} \frac{v_i(A_i)}{w_i} \leq 0.
    \end{equation*}
    
    Therefore, $G_\mathbf{A}$ has no positive-weight cycles, and thus $\mathbf{A}$ is WEFable.
\end{proof}

In summary, we have the following equivalent characterizations for WEFable allocations.

\begin{theorem}
\label{theorem:WEFable-properties}
 For an allocation $\mathbf{A}$, the following statements are equivalent.
 \begin{enumerate} [label=(\alph*), itemsep=5pt, leftmargin=20pt,itemindent=10pt]
     \item $\mathbf{A}$ is weighted envy-freeable.
     \item $G_\mathbf{A}$ has no positive-weight cycles.
    \item $\mathbf{A}$ is a permutation resistant.
 \end{enumerate}
\end{theorem}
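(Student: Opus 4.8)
The plan is to observe that Theorem~\ref{theorem:WEFable-properties} is nothing more than a repackaging of the two lemmas immediately preceding it, so the proof should simply chain their biconditionals together. Concretely, I would show $(a)\Leftrightarrow(b)$ by citing Lemma~\ref{lemma:no-positive-weight-cycles}, which states precisely that an allocation $\mathbf{A}$ is WEFable if and only if $G_{\mathbf{A}}$ has no positive-weight cycles. Then I would show $(a)\Leftrightarrow(c)$ by citing Lemma~\ref{lemma:equivalence-permutation-max}, which states that $\mathbf{A}$ is WEFable if and only if it is permutation resistant. Since $(a)$ is equivalent to each of $(b)$ and $(c)$ separately, all three statements are pairwise equivalent, which is exactly what the theorem asserts.

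The only thing worth spelling out is the logical bookkeeping: from ``$(a)\Leftrightarrow(b)$'' and ``$(a)\Leftrightarrow(c)$'' one concludes ``$(b)\Leftrightarrow(c)$'' as well (by transitivity of $\Leftrightarrow$), so the cyclic chain $(a)\Rightarrow(b)\Rightarrow(c)\Rightarrow(a)$ is established. I would phrase the proof in exactly that cyclic form for cleanliness: $(a)\Rightarrow(b)$ and $(b)\Rightarrow(a)$ are the two directions of Lemma~\ref{lemma:no-positive-weight-cycles}; $(a)\Rightarrow(c)$ and $(c)\Rightarrow(a)$ are the two directions of Lemma~\ref{lemma:equivalence-permutation-max}. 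Composing, $(b)\Rightarrow(a)\Rightarrow(c)$ and $(c)\Rightarrow(a)\Rightarrow(b)$, closing the loop.

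There is essentially no mathematical obstacle here — the theorem is a summary statement and all the content lives in the two lemmas. The ``hard part,'' if any, was already done in proving Lemma~\ref{lemma:safe-removal-of-edges}-style cycle arguments inside Lemma~\ref{lemma:no-positive-weight-cycles} (constructing the envy-eliminating subsidy vector $p_i = w_i\cdot\ell(i)$ and verifying it works) and the permutation-to-cycle-decomposition argument in Lemma~\ref{lemma:equivalence-permutation-max}. So the proof I would write is just two or three sentences invoking those lemmas and noting transitivity.

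\begin{proof}
The equivalence $(a)\Leftrightarrow(b)$ is exactly Lemma~\ref{lemma:no-positive-weight-cycles}, and the equivalence $(a)\Leftrightarrow(c)$ is exactly Lemma~\ref{lemma:equivalence-permutation-max}. Combining the two, $(b)\Leftrightarrow(a)\Leftrightarrow(c)$, so all three statements are pairwise equivalent.
\end{proof}
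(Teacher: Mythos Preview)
Your proposal is correct and matches the paper's treatment exactly: the theorem is stated as a summary (``In summary, we have the following equivalent characterizations\ldots'') with no separate proof, since all the work is done in Lemma~\ref{lemma:no-positive-weight-cycles} and Lemma~\ref{lemma:equivalence-permutation-max}. Your two-sentence proof citing those lemmas and invoking transitivity is precisely what is needed.
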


We further determine the minimum required subsidy for a given WEFable allocation $\mathbf{A}$.
In fact, this payment vector is exactly the one we constructed in the proof of Lemma~\ref{lemma:no-positive-weight-cycles}.

\begin{theorem}
For a WEFable allocation $\mathbf{A}$, let $p_{i}^*=w_i\cdot \ell(i)$ for each $i \in N$,
where $\ell(i)$ is the maximum weight of any path starting at $i$ in $G_\mathbf{A}$.
Then, for every $\mathbf{P} \in \mathcal{P}(\mathbf{A})$ and $i \in N$, $p_i \geq p_{i}^*$.
\end{theorem}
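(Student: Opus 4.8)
The plan is to show that the subsidy vector $\mathbf{P}^*$ with $p_i^* = w_i \cdot \ell(i)$ is pointwise minimal among all envy-eliminating subsidy vectors. I would fix an arbitrary $\mathbf{P} \in \mathcal{P}(\mathbf{A})$ and an arbitrary agent $i \in N$, and argue $p_i \geq p_i^*$ by unpacking the definition of $\ell(i)$. Since $\ell(i) = \max_{j} \ell(i,j)$ and each $\ell(i,j)$ is the maximum weight of a path from $i$ to $j$, there is some agent $j$ and some path $\mathcal{L} = (i = i_0, i_1, \ldots, i_k = j)$ in $G_{\mathbf{A}}$ with $w(\mathcal{L}) = \ell(i)$.

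The key step is a telescoping argument along this path. For each consecutive pair $(i_t, i_{t+1})$, weighted envy-freeness of the outcome $(\mathbf{A}, \mathbf{P})$ gives
\begin{equation*}
\frac{v_{i_t}(A_{i_t}) + p_{i_t}}{w_{i_t}} \geq \frac{v_{i_t}(A_{i_{t+1}}) + p_{i_{t+1}}}{w_{i_{t+1}}},
\end{equation*}
which rearranges to $w(i_t, i_{t+1}) \leq \frac{p_{i_t}}{w_{i_t}} - \frac{p_{i_{t+1}}}{w_{i_{t+1}}}$. Summing over $t = 0, \ldots, k-1$ and telescoping yields $w(\mathcal{L}) \leq \frac{p_{i_0}}{w_{i_0}} - \frac{p_{i_k}}{w_{i_k}} = \frac{p_i}{w_i} - \frac{p_j}{w_j}$. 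Since $p_j \geq 0$, this gives $\frac{p_i}{w_i} \geq w(\mathcal{L}) = \ell(i)$, i.e.\ $p_i \geq w_i \cdot \ell(i) = p_i^*$, as desired. I would also note that $\ell(i) \geq w(i,i) = 0$ since the trivial path is available, so $p_i^* \geq 0$ and $\mathbf{P}^*$ is a legitimate (non-negative) subsidy vector; combined with the ``if'' direction of Lemma~\ref{lemma:no-positive-weight-cycles}, which already established $\mathbf{P}^* \in \mathcal{P}(\mathbf{A})$, this confirms $\mathbf{P}^*$ is actually the coordinatewise minimum of $\mathcal{P}(\mathbf{A})$, not merely a lower bound.

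I do not anticipate a serious obstacle here — the argument is essentially the ``only if'' direction of Lemma~\ref{lemma:no-positive-weight-cycles} applied to a path rather than a cycle, plus the observation that non-negativity of $p_j$ lets us drop the endpoint term. The one point requiring a little care is that $\ell(i)$ is well-defined and finite: this relies on $G_{\mathbf{A}}$ having no positive-weight cycles (guaranteed since $\mathbf{A}$ is WEFable, by Lemma~\ref{lemma:no-positive-weight-cycles}), so that the maximum-weight path is attained by a simple path and the max is over a finite set. Once that is in place, the telescoping inequality does all the work.
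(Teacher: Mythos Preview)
Your proposal is correct and follows essentially the same approach as the paper: both take the maximum-weight path from $i$, apply the WEF inequality along each edge, telescope, and use non-negativity of the endpoint subsidy to conclude. The only cosmetic difference is that the paper wraps the argument as a proof by contradiction while you give it directly; your version is arguably cleaner, and your added remarks on well-definedness of $\ell(i)$ and on $\mathbf{P}^* \in \mathcal{P}(\mathbf{A})$ are nice touches that the paper leaves implicit.
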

\begin{proof}
We proof the theorem by contradiction: assume that there exists a subsidy $\mathbf{P}=(p_1,p_2,\cdots,p_n)$ such that for a fixed agent $i_0$, $p_{i_0} < p_{i_0}^*$. 
Consider the maximum-weight path starting at $i_0$ in $\mathbf{A}$.
Suppose it is $\mathcal{L}= (i_0,i_1,\ldots,i_{k})$.
By the definition of $\ell(i_0)$, we have $\ell(i_0)=\sum_{t=0}^{k-1}w(i_t,i_{t+1}).$
Since $(\mathbf{A}, \mathbf{P})$ is WEF, for each $t\in \{0,1,\ldots,k-1\}$ we have
 \begin{equation*}
    \frac{v_{i_t}(A_{i_t}) + p_{i_t}}{w_{i_t}} \geq \frac{v_{i_t}(A_{i_{t+1}}) + p_{i_{t+1}}}{w_{i_{t+1}}}.
\end{equation*}

Reordering the above inequality gives
\begin{equation*}
    \frac{p_{i_{t}}}{w_{i_t}} - \frac{p_{i_{t+1}}}{w_{i_{t+1}}} \ge \frac{v_{i_t}(A_{i_{t+1}})}{w_{i_{t+1}}} - \frac{v_{i_t}(A_{i_t})}{w_{i_t}}= 
     w(i_t,i_{t+1}).
\end{equation*}
    
Summing this over all $t\in \{0,1,\ldots,k-1\}$, we have
 \begin{equation*}
        \frac{p_{i_0}}{w_{i_0}} - \frac{p_{i_{k-1}}}{w_{i_{k-1}}} \ge 
        \ell(i_0) =
        \frac{p_{i_0}^*}{w_{i_0}}.
    \end{equation*}

Reordering the inequality above, we obtain
\begin{equation*}
   p_{i_0} \ge p_{i_0}^{*} +\frac{w_{i_0}}{w_{i_{k-1}}}\cdot p_{i_{k-1}} \geq  p_{i_0}^{*},
\end{equation*}
which is a contradiction with the assumption that $p_{i_0} < p_{i_0}^*$.
\end{proof}

Note that each $p_{i}^*$ can be computed in $O(n^3)$ time\footnote{For general allocation problems, each $p_{i}^*$ can be computed in $O(n^3+nm)$ time.} by the Floyd-Marshall algorithm.
Therefore, given any WEFable allocation, it takes polynomial time to compute the minimum envy-eliminating subsidies.

\subsection{Non-existence of WEFable Allocations}

Observation~\ref{observation:EFable exist} indicates that for the unweighted case, every house allocation instance admits envy-freeable allocations. 
Unfortunately, this does not hold in the weighted setting.
The following example illustrates that WEFable allocations do not necessarily exist, even when there are only two agents and two houses.

\begin{example}[The Hard Instance]
\label{example:Hard-Instance}
    Consider the following hard instance with two agents $\left\{1,2\right\}$ and two houses $\{h_1,h_2\}$.
    The agents' utility functions are shown in Table~\ref{tab1} for any $\epsilon < 1$:
    \begin{table}[htb]
    \centering
 \renewcommand{\arraystretch}{1.2}
 \setlength\tabcolsep{12pt}
        \begin{tabular}{c|c|c}
    	&  $h_1$ & $h_2$\\ \hline
    	agent 1   & $\epsilon$ & $\epsilon$   \\
    	agent 2   & $1$ & $1$ 
        \end{tabular}
         \vspace{3mm}
        \caption{Hard Instance 1: suppose $w_1 < w_2$.}
         \label{tab1}
    \end{table}
    
    For the problem of house allocation, we can assume w.l.o.g. that $h_i$ is allocated to agent $i$.
    However, the corresponding envy-graph has a positive-weight cycle because
    \begin{equation*}
        w(1,2)+w(2,1) = \left( \frac{\epsilon}{w_2} - \frac{\epsilon}{w_1} \right) + \left( \frac{1}{w_1} - \frac{1}{w_2} \right) = \left( \frac{1}{w_1} - \frac{1}{w_2} \right) \cdot (1-\epsilon) > 0.
    \end{equation*}
    
    The weighted envy graph constructed for the allocation is shown below and we can observe that the weights of the cycle in Fig.~\ref{figure:An Example of Weighted Envy Graph} are positive.
    Therefore, this example demonstrates that WEFable allocations do not always exist.
    Note that $\epsilon$ can be $0$, which means that even for binary instances with two agents, WEFable allocations do not always exist.
    
    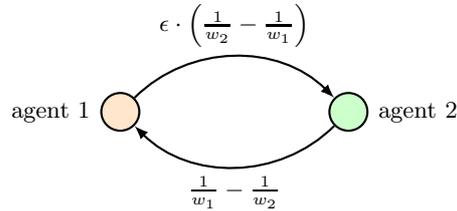
\begin{figure}[htb]
    \captionsetup{font=footnotesize}
    \centering
    \begin{tikzpicture}
     [
        node distance=3cm,
        on grid,
        thick,
        font=\small]
     
    \node
    [ 
        state,
        fill=orange!20,
        align=center,
        inner sep=1mm,
        minimum size=5mm,
    ] (q_0) [label=left: agent $1$]
    {
    };
     
    \node
    [
        state,
        fill=green!20,
        align=center,
        inner sep=5pt,
        minimum size=5mm,
    ] (q_1) [right=of q_0] [label=right:agent $2$]
    {
    
    };
     
    \path [-latex]
        (q_0) edge [bend left=45] 
            node [above] {$\epsilon \cdot \left(\frac{1}{w_2} - \frac{1}{w_1}\right)$} (q_1)
        (q_1) edge [bend left=45]   
            node [below] {$\frac{1}{w_1} - \frac{1}{w_2}$} (q_0);
    \end{tikzpicture}
    
    \captionsetup{font=small}
    \caption{The corresponding Weighted Envy Graph.}
    \label{figure:An Example of Weighted Envy Graph}
    \end{figure}
\end{example}

We can also verify whether the allocation if WEFable or not by the property of permutation resistant.
For the hard instance in Example~\ref{example:Hard-Instance}, any allocation has weighted social welfare $\frac{\epsilon}{w_1} + \frac{1}{w_2}$, which is smaller than $\frac{\epsilon}{w_2} + \frac{1}{w_1}$, i.e., its value under the permutation $\sigma:\sigma(1)=2; \sigma(2)=1$.

\paragraph{Remark.}
We remark that the non-existence of WEFable is due to the existence of agent weights and the constraint that each agent gets exactly one house.
If any of the two requirements is remove, WEFable allocations always exist.
As Observation~\ref{observation:EFable exist} shows, when agents have equal weights, a WEFable allocation can be easily found by computing a maximum weight matching.
Without the house allocation constraint, we can also get a WEFable allocation easily, by allocation all houses to the agent $i$ with maximum $v_i(H) = \sum_{h\in H} v_i(h)$ (here we assume the utility function is additive: it can be verified that this allocation is WEFable because it is permutation resistant), and setting $p_j = \frac{w_j}{w_i} \cdot v_j(H)$ for all other agent $j \in N \setminus \{i\}$.

\medskip

Note that the weighted envy-graph is not utility function scale-free because if we scale the utility function of agent $i$, only the outgoing edges of $i$ change their weight (same as \cite{SAGT2019,conf/sigecom/BrustleDNSV20,conf/wine/WuZZ23}). 
It is possible that an instance admits WEFable allocations under one normalization but does not under a different normalization.
In fact, if we scale the utility function of agent $1$ in Example~\ref{example:Hard-Instance} by a factor of $1/\epsilon$ (suppose $\epsilon>0$), then the instance admits WEFable allocations.
Nevertheless, the instance (and the weighted envy graph) is still scale-free if we scale all utility functions by the same ratio.
It is thus interesting to study whether WEFable allocations exist under some natural normalization assumptions.
As Example~\ref{example:Hard-Instance} shows, the standard normalization (used by \cite{SAGT2019,conf/sigecom/BrustleDNSV20,conf/wine/WuZZ23}) of assuming $v_i(h)\leq 1$ for all $i\in N$ and $h\in H$ does not help ensuring the existence of WEFable allocation.
We then turn to the other normalization assumption, assuming all agents have additive utility functions and $v_i(H) = 1$ for all $i\in N$.
Unfortunately, we show that WEFable allocations do not always exist under this assumption, unless $n=2$.

\begin{lemma} \label{lemma:normalized}
    For normalized instances ($v_i(H) = 1$ for all $i\in N$), WEFable allocations always exist when $n=2$. When $n\geq 3$, WEFable allocations do not always exist.
\end{lemma}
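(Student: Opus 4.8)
The plan is to split on $n$ and use Theorem~\ref{theorem:WEFable-properties} as the main tool throughout: an allocation is WEFable exactly when its weighted envy graph has no positive-weight cycle. For $n=2$ the only simple cycle of the weighted envy graph is the $2$-cycle between the two agents, so it suffices to exhibit one allocation whose $2$-cycle weight is nonpositive. I would set $d(h)=v_1(h)-v_2(h)$ and observe that $\sum_{h\in H}d(h)=v_1(H)-v_2(H)=0$ by normalization; hence there are two distinct houses $h^{+},h^{-}$ with $d(h^{+})\ge 0\ge d(h^{-})$ (take an $\arg\max$ and an $\arg\min$ of $d$, distinct unless $d\equiv 0$, in which case pick any two distinct houses). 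Assigning $h^{+}$ to agent $1$ and $h^{-}$ to agent $2$, a one-line computation gives $2$-cycle weight $-d(h^{+})/w_1+d(h^{-})/w_2\le 0$, so this allocation is WEFable. This is the easy direction.

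For $n\ge 3$ the plan is to exhibit an explicit normalized instance with no WEFable allocation, the heart being the case $n=3$. For $n=3$ I would take houses $h_1,h_2,h_3$, weights $w_1=w_2=1$, $w_3=2$, and utilities $v_1=v_2=(\tfrac13,\tfrac13,\tfrac13)$, $v_3=(\tfrac12,\tfrac12,0)$, which is normalized. Then for each of the six allocations I would evaluate a single $2$-cycle weight and check it is strictly positive: when agent $3$ (who has double weight and values $h_1,h_2$ highly) receives $h_3$, the $2$-cycle between agent $3$ and whichever unit-weight agent holds $h_1$ or $h_2$ is positive; when agent $3$ receives $h_1$ or $h_2$, the $2$-cycle between agent $3$ and the agent holding $h_3$ is positive. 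Alternatively, writing $g=v_3-v_1$: since $v_1=v_2$, the $2$-cycle between agents $1$ and $2$ vanishes and each $3$-cycle coincides with a $2$-cycle through agent $3$, so WEFability reduces to the two inequalities $g(A_1)/w_1\le g(A_3)/w_3$ and $g(A_2)/w_2\le g(A_3)/w_3$; with $g=(\gamma,\gamma,-2\gamma)$, $\gamma>0$, and $w_3>w_1=w_2$, no assignment of the three houses satisfies both. Either way, no allocation is WEFable, in contrast with Observation~\ref{observation:EFable exist}.

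To lift this to arbitrary $n\ge 3$, I would add agents $4,\dots,n$ and houses $h_4,\dots,h_n$ (so $m=n$) with $v_k(h_k)=1$, $v_k(h)=0$ for $h\ne h_k$, arbitrary $w_k>0$, and $v_i(h_k)=0$ for $i\le 3<k$; the instance remains normalized. Two short observations close the argument: (i) if in some allocation an agent $k\ge 4$ does not receive $h_k$, then the agent who does receive $h_k$ forms a positive-weight $2$-cycle with $k$, so in any WEFable allocation each $k\ge 4$ holds $h_k$ and agents $1,2,3$ share $h_1,h_2,h_3$; and (ii) when $k\ge 4$ holds $h_k$, every edge at $k$ is nonpositive and moreover $w(a,b)\ge w(a,k)+w(k,b)$ for all $a,b\ne k$, so any positive cycle can be rerouted to bypass every such $k$, yielding a positive cycle supported on $\{1,2,3\}$, i.e.\ a positive cycle of the core $3$-agent instance, which by the previous paragraph exists for every allocation of $h_1,h_2,h_3$ to agents $1,2,3$. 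Hence no allocation of the padded instance is WEFable for any $n\ge 3$.

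The main obstacle is the $n=3$ construction itself. Because $m=n$ yields $n!$ candidate allocations and the ``natural'' ones (for instance, giving each agent a high-valued house) tend to be WEFable, the instance must be engineered so that the allocation constraint together with a single heavily-weighted agent's preferences forces a positive cycle in \emph{every} allocation; the normalization $v_i(H)=1$ is precisely what blocks the easy $n=2$-style fix. Pinning down such an instance, confirming all six cases for the $n=3$ core, and checking the two reduction claims for $n>3$ is where the real work lies, although each individual verification is a routine arithmetic computation.
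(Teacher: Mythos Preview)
Your proof is correct and follows the same two-part structure as the paper. For $n=2$ your argument is identical to the paper's: pick two houses where $v_1-v_2$ has opposite signs (guaranteed by normalization) and check the unique $2$-cycle is nonpositive. For $n\ge 3$ the paper likewise builds a small hard core and pads with ``dummy'' agents $k$ satisfying $v_k(h_k)=1$, $v_k(h)=0$ otherwise, forcing $A_k=h_k$ in any WEFable allocation; the only difference is the core itself. The paper's core is two agents $1,2$ with $v_1=(0.5-\epsilon,0.5-\epsilon,2\epsilon,0,\dots)$, $v_2=(0.5,0.5,0,\dots)$ and $w_1<w_2$ --- a normalized version of Example~\ref{example:Hard-Instance} --- whereas yours is a three-agent core with $v_1=v_2$ identical and one heavy agent. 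Your core is a bit cleaner to analyze via the reduction to the two inequalities on $g=v_3-v_1$; the paper's core is a one-line computation reusing Example~\ref{example:Hard-Instance}. Both are valid.

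One minor remark: your claim (ii) about rerouting is stated in the wrong direction and is in any case unnecessary. What you need is that the core's positive cycle persists in the full instance, and this is immediate: once each $k\ge 4$ holds $h_k$, the edge weights $w(i,j)$ for $i,j\in\{1,2,3\}$ coincide with those of the $3$-agent core (they depend only on $A_i,A_j\in\{h_1,h_2,h_3\}$ and on $w_1,w_2,w_3$), so the positive $2$-cycle you found there is already a positive cycle of the full weighted envy graph.
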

\begin{proof}
    When $n=2$, there must exist houses $h_1, h_2$ such that $v_1(h_1) \geq v_2(h_1)$ and $v_2(h_2) \geq v_1(h_2)$ (i.e., the allocation $(A_1 = h_1, A_2 = h_2)$ is unweighted envy-free). 
    Thus the allocation $\mathbf{A}$ is weighted envy-freeable since $G_\mathbf{A}$ does not have positive-weight cycles due to
    \begin{equation*}
        w(1,2)+w(2,1) = \frac{v_1(h_2) - v_2(h_2)}{w_2} + \frac{v_2(h_1) - v_1(h_1)}{w_1} \leq 0.
    \end{equation*}

    Unfortunately, weighted envy-freeable allocations are not guaranteed to exist when $n\geq 3$. 
    Consider the following example:
    
    \begin{itemize}
        \item[$\bullet$] let the weights of the $n$ agents satisfy $w_1 < w_2 < \cdots < w_n$;
        \item[$\bullet$] let $v_1(h_1) = v_1(h_2) = 0.5 - \epsilon$, $v_1(h_3) = 2\epsilon$, $v_1(h) = 0$ for all other houses $h$;
        \item[$\bullet$] let $v_2(h_1) = v_1(h_2) = 0.5$, $v_2(h) = 0$ for all other houses $h$;
        \item[$\bullet$] for all $i\geq 3$, let $v_i(h_i) = 1$; $v_i(h) = 0$ for all $h\neq h_i$.
    \end{itemize}
    
    \begin{table}[htb]
    \centering
     \renewcommand{\arraystretch}{1.2}
 \setlength\tabcolsep{12pt}
        \begin{tabular}{c|c|c|c|c|c}
    	&  $h_1$ & $h_2$ & $h_3$ & $\cdots$ & $h_n$\\ \hline
    	agent $1$   & $0.5-\epsilon$ & $0.5-\epsilon$ & $2\epsilon$ & $\cdots$ & $0$  \\
    	agent $2$   & $0.5$ & $0.5$  & $0$ & $\cdots$ & $0$ \\
    	agent $3$   & $0$ & $0$  & $1$ & $\cdots$ & $0$ \\
    	$\vdots$   & $\vdots$ & $\vdots$  & $\vdots$ & $\ddots$ & $\vdots$ \\
    	agent $n$   & $0$ & $0$  & $0$ & $\cdots$ & $1$ \\
        \end{tabular}
        \vspace{3mm}
        \caption{Hard Instance 2: suppose $w_1 < w_2 < \cdots < w_n$.}
    \end{table}

    For all $i\geq 3$, if house $h_i$ is not allocated to agent $i$, then agent $i$ will form a positive-weight cycle with the agent $j$ who receives $h_i$, because
    \begin{equation*}
        w(i,j) + w(j,i) = \frac{1}{w_j} + \frac{v_j(A_i)}{w_i} - \frac{v_j(h_i)}{w_j} \geq \frac{1}{w_j} - \frac{2\epsilon}{w_j} > 0.
    \end{equation*}
    
    However, if $h_i$ is allocated to agent $i$ for all $i\geq 3$, then there will be a positive-weight cycle between agent $1$ and $2$, in a way similar to Example~\ref{example:Hard-Instance}.
\end{proof}

\section{Weighted Envy-freeable Allocations in Tractable Cases}

Given the non-existence results we have presented, a natural open question is whether we can decide the existence of WEFable allocations (and output one if they exist) in polynomial time.
In this section, we present polynomial time algorithms for the problem in some natural special cases.

\subsection{Identical Utility Functions}
We first consider the case where all agents have the same utility function for the houses, that is, $v_i(h) = v_j(h)$ for all $i, j \in N$ and $h \in H$. 
We denote the common utility function by $v$.

\begin{theorem}
\label{theorem:Identical-is-WEFable}
    When agents have identical utility functions, all allocations are WEFable.
\end{theorem}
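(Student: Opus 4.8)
The plan is to use the characterization from Theorem~\ref{theorem:WEFable-properties}: it suffices to exhibit a single allocation $\mathbf{A}$ whose weighted envy graph $G_\mathbf{A}$ has no positive-weight cycle, or equivalently that is permutation resistant. The natural candidate is the allocation that sorts agents and houses ``in the same order'': since all agents share the utility function $v$, there is a well-defined ranking of the (relevant) houses by value. The key idea is to assign the more valuable houses to agents with larger weights. Concretely, relabel so that $w_1 \le w_2 \le \cdots \le w_n$, pick any $n$ houses and relabel them so that $v(A_1) \le v(A_2) \le \cdots \le v(A_n)$, and set $A_i$ to be the $i$-th house in this order; I would first argue we may restrict attention to the $n$ highest-valued houses, though for WEFability any fixed choice of $n$ houses works since the characterization only concerns the allocated houses.

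First I would verify that this ``assortative'' allocation is permutation resistant. For any permutation $\sigma$, I need
\begin{equation*}
\sum_{i\in N}\frac{v(A_i)}{w_i} \ge \sum_{i\in N}\frac{v(A_{\sigma(i)})}{w_{\sigma(i)}}.
\end{equation*}
Reindexing the right-hand side, $\sum_i \frac{v(A_{\sigma(i)})}{w_{\sigma(i)}} = \sum_j \frac{v(A_j)}{w_j}$ — wait, that reindexing shows the two sides are literally equal, so I must be careful: the denominator on the RHS is $w_{\sigma(i)}$, not $w_i$, so the RHS is $\sum_i v(A_{\sigma(i)})/w_{\sigma(i)}$ which by substituting $j=\sigma(i)$ equals $\sum_j v(A_j)/w_j$. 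So permutation resistance as literally written is trivially an equality here. Hence I would instead work directly with the envy-graph / cycle characterization, which is the cleaner route for identical utilities: show $G_\mathbf{A}$ has no positive-weight cycle. For the assortative allocation, the edge weight is $w(i,j) = v(A_j)/w_j - v(A_i)/w_i$. Along any cycle these telescope, but edge weights need not individually be nonpositive, so I would use the following: a weighted digraph has no positive cycle iff there is a potential function $\phi$ with $w(i,j) \le \phi(i) - \phi(j)$ for all edges. Take $\phi(i) = v(A_i)/w_i$. Then $w(i,j) = v(A_j)/w_j - v(A_i)/w_i = \phi(j) - \phi(i) \le \phi(i) - \phi(j)$ would require $\phi(j) - \phi(i) \le \phi(i) - \phi(j)$, i.e. $\phi(i) \ge \phi(j)$ — not true in general either. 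So the honest computation is: for \emph{any} allocation with identical utilities, a cycle $\mathcal{C}=(i_0,\dots,i_k=i_0)$ has weight $w(\mathcal{C}) = \sum_{t} \big(v(A_{i_{t+1}})/w_{i_{t+1}} - v(A_{i_t})/w_{i_t}\big) = 0$ by telescoping, since $v_{i_t}(A_j) = v(A_j)$ does not depend on $i_t$. That is the real point: with identical utilities every cycle has weight exactly zero.

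So the actual proof is short: for identical utilities the weighted envy graph $G_\mathbf{A}$ of \emph{every} allocation $\mathbf{A}$ satisfies $w(i,j) = v(A_j)/w_j - v(A_i)/w_i$, which depends on $i$ only through the term $v(A_i)/w_i$; hence summing around any directed cycle telescopes to $0$, so there are no positive-weight cycles, and by Lemma~\ref{lemma:no-positive-weight-cycles} (equivalently Theorem~\ref{theorem:WEFable-properties}) every allocation is WEFable. I would also note the explicit subsidy: by the construction in the proof of Lemma~\ref{lemma:no-positive-weight-cycles}, $p_i = w_i \cdot \ell(i) = w_i \cdot \max_j\big(v(A_j)/w_j - v(A_i)/w_i\big) = \max_j\big(w_i \cdot v(A_j)/w_j\big) - v(A_i)$, making the outcome explicitly WEF. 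The only genuinely delicate point — the ``main obstacle'' — is getting the bookkeeping of $v_i$ versus the common $v$ exactly right (the utilities in $w(i,j)$ are $v_i(A_j)$, and the whole argument hinges on replacing $v_i$ by $v$ so that the envious-agent index drops out); once that substitution is made the telescoping is immediate, and nothing about the house-allocation constraint ($m \ge n$, one house per agent) is needed.
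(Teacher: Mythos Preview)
Your final argument is correct and matches the paper's: with identical utilities, the weighted envy graph has every cycle telescoping to zero (equivalently, as you also computed, permutation resistance holds with equality by the reindexing $j=\sigma(i)$), so Theorem~\ref{theorem:WEFable-properties} gives WEFability for \emph{every} allocation, and your explicit subsidy $p_i=\max_j\big(w_i\,v(A_j)/w_j\big)-v(A_i)$ coincides with the paper's. The only remark is that the permutation-resistance equality you noticed and then set aside as ``too trivial'' is literally the paper's entire proof --- the detours through assortative allocations and potential functions are unnecessary.
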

\begin{proof}
Consider any allocation $\mathbf{A}=(A_1,\dots,A_n)$ and for any permutation $\sigma: [n] \to [n]$, we have $$\sum_{i \in N}{\frac{v(A_i)}{w_i} = \sum_{i\in N}{\frac{v(A_{\sigma(i)})}{w_{\sigma(i)}}}}.$$
It means that the allocation $\mathbf{A}$ is WEFable by Theorem~\ref{theorem:WEFable-properties}.
Then, we only need to find the agent $i$ with maximum $\frac{v(A_i)}{w_i}$, and subsidize other agent $j \in N \setminus \{i\}$ with $p_j=\frac{w_j}{w_i}\cdot v(A_i) - v(A_j)$.
The outcome is WEF because (let $p_i = 0$):
\begin{equation*}
    \forall j \in N,\ \ \frac{v(A_j)+p_j}{w_j} = \frac{v(A_i)}{w_i}.
    \qedhere
\end{equation*}
\end{proof}

\subsection{Two Types of Agents}

Next we consider the case when there are only two types of agents, where agents of the same type have the same weight and utility function.
The setting captures our motivating example for weighted house allocation, in which the agents can be categorized into large families and small families.
Formally speaking, the set of agents is partitioned into large agents $N_L$ and small agents $N_S$, where all agents in $N_L$ have weight $w_L$ and utility function $v_L$; all agents in $N_S$ have weight $w_S$ and utility function $v_S$.
Let $n_L = |N_L|$ and $n_S = |N_S|$ be the number of large and small agents, respectively.
We have $n_L + n_S = n \leq m$.
Note that WEFable allocations are not guaranteed to exist even when there are only two types of agents, as Example~\ref{example:Hard-Instance} shows.

\begin{algorithm}[htp]
\caption{WEFable Allocations for Two Types of Agents}
\label{Alg:two-types-of-agents}
\KwIn{An instance $\mathcal{I}=(N_L\cup N_S,H,(w_L,w_S),(v_L,v_S))$ with two types of agents;}
\KwOut{A WEFable allocation $\mathbf{A}$ or ``No such allocation'';}
Sort and reindex the houses in descending order of $v_L(h)-v_S(h)$:
\begin{equation*}
    v_L(h_1)-v_S(h_1) \geq v_L(h_2)-v_S(h_2) \geq \cdots \geq v_L(h_m)-v_S(h_m).
\end{equation*}\\
\If{$\frac{1}{w_L}(v_L(h_{n_L}) - v_S(h_{n_L}))\geq \frac{1}{w_S}(v_L(h_{m-n_S+1}) - v_S(h_{m-n_S+1}))$}
{
Assign houses $\{ h_1,h_2,\ldots,h_{n_L} \}$ to agents in $N_L$ and houses $\{ h_{m-n_S+1},\ldots,h_{m} \}$ to agents in $N_S$, making sure that each agent receives one house\;
\textbf{return} the corresponding allocation $\mathbf{A}$;
}
\Else
{
\textbf{output} ``No such allocation''.
}
\end{algorithm}

We prove the correctness of our algorithm by showing that if the condition in line 2 is satisfied, then $G_\mathbf{A}$ of the returned allocation $\mathbf{A}$ does not contain any positive-weight cycle (which by Theorem~\ref{theorem:WEFable-properties} implies that the allocation is WEFable);  if the condition in line 2 is not satisfied, then no allocation is WEFable.
We first show that for any allocation $\mathbf{A}$, to check whether $G_\mathbf{A}$ contains positive-weight cycles, it suffices to check cycles of length $2$.

\begin{lemma} \label{lemma:length-2-cycle-suffices}
For any allocation $\mathbf{A}$ for an instance with two types of agents, $G_\mathbf{A}$ contains a positive-weight cycle if and only if it contains a positive-weight cycle with length $2$.
Moreover, the two agents in the length-$2$ cycle are of different types.
\end{lemma}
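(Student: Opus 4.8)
The plan is to exploit the structure of the two-type setting: every edge weight $w(i,j)$ in $G_\mathbf{A}$ depends only on the \emph{types} of $i$ and $j$ and on the houses $A_i, A_j$. I would first record two easy observations. First, an edge between two agents of the \emph{same} type, say both large with houses $A_i = h$ and $A_j = h'$, has weight $w(i,j) = \frac{v_L(h')}{w_L} - \frac{v_L(h)}{w_L} = \frac{1}{w_L}(v_L(h') - v_L(h))$, and the reverse edge has the negated weight, so any length-$2$ cycle within one type has weight exactly $0$. Second, I would argue that a positive-weight cycle of minimal length cannot contain two consecutive vertices of the same type: if $i_t$ and $i_{t+1}$ are both large, I can ``short-cut'' by deleting $i_{t+1}$ and routing directly $i_t \to i_{t+2}$, and I would check (a short computation using the fact that same-type edges only involve the $\frac{1}{w_L}v_L(\cdot)$ or $\frac{1}{w_S}v_S(\cdot)$ term of the endpoint's house) that the total weight does not decrease. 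Hence a minimal positive-weight cycle alternates between the two types, which forces it to have even length, and then a standard averaging/decomposition argument shows it can be reduced to a length-$2$ cycle.

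Concretely, for the alternating reduction I would proceed as follows. Suppose $\mathcal{C} = (i_0, i_1, \ldots, i_k)$ is a positive-weight cycle, with no two consecutive vertices of the same type (by the short-cutting step above), so $k$ is even and the vertices alternate $L, S, L, S, \ldots$. Write the weight of an $L \to S$ edge from an agent holding house $h$ to an agent holding house $h'$ as $\frac{v_L(h')}{w_S} - \frac{v_L(h)}{w_L}$, and similarly for $S \to L$ edges. I would group the $k/2$ consecutive pairs (an $L\to S$ edge followed by the $S\to L$ edge out of that same $S$-agent): each such pair, plus a notional ``return'' edge, forms a length-$2$ cycle between one large agent and one small agent, and I would show algebraically that $w(\mathcal{C})$ equals the average (or a nonnegative combination) of the weights of these length-$2$ cycles, so at least one of them is positive. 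This is the standard ``a positive cycle contains a positive short cycle'' idea, specialized to the bipartite-by-type structure.

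The step I expect to be the main obstacle is making the same-type short-cutting rigorous: when I delete a same-type vertex $i_{t+1}$ and connect $i_t$ directly to $i_{t+2}$, I must verify $w(i_t, i_{t+2}) \ge w(i_t, i_{t+1}) + w(i_{t+1}, i_{t+2})$. Expanding, the left side is $\frac{v_{\mathrm{type}(i_t)}(A_{i_{t+2}})}{w_{\mathrm{type}(i_{t+2})}} - \frac{v_{\mathrm{type}(i_t)}(A_{i_t})}{w_{\mathrm{type}(i_t)}}$ while the right side telescopes in the $A_{i_t}$ and $A_{i_{t+2}}$ terms but leaves a residual term $\frac{v_{\mathrm{type}(i_t)}(A_{i_{t+1}})}{w_{\mathrm{type}(i_{t+1})}} - \frac{v_{\mathrm{type}(i_{t+1})}(A_{i_{t+1}})}{w_{\mathrm{type}(i_{t+1})}}$; since $i_{t+1}$ and (one of) its neighbors share a type, $\mathrm{type}(i_t) = \mathrm{type}(i_{t+1})$ in the problematic case, so this residual vanishes and the inequality holds with equality. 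I would be careful to handle the case $i_t = i_{t+2}$ (the cycle degenerates) and the case where \emph{both} neighbors of $i_{t+1}$ share its type (then the whole run collapses). Once the alternating structure is secured, the final ``length-$2$ cycle with agents of different types'' claim follows immediately from the decomposition, giving exactly the statement of the lemma.
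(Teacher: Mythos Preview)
Your proposal is correct and shares its overall shape with the paper's proof, but the final decomposition step is genuinely different. Both arguments first reduce an arbitrary positive-weight cycle to an \emph{alternating} one: your ``short-cutting'' and the paper's ``path compression'' are the same observation that $w(i_t,i_{t+2}) = w(i_t,i_{t+1}) + w(i_{t+1},i_{t+2})$ whenever $i_t$ and $i_{t+1}$ share a type. For the second step, the paper reverses the alternating cycle $\mathcal{C}'$ to get $\overleftarrow{\mathcal{C}}$, proves $w(\overleftarrow{\mathcal{C}}) = w(\mathcal{C}')$ by a careful regrouping of sums, and then observes that $\mathcal{C}' \cup \overleftarrow{\mathcal{C}}$ decomposes into length-$2$ cross-type cycles of total weight $2w(\mathcal{C}') > 0$. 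Your route instead closes each block $L_t \to S_t \to L_{t+1}$ with the same-type return edge $L_{t+1}\to L_t$; the key fact you should make explicit (rather than leaving as ``show algebraically'') is that these added edges telescope: $\sum_t w(L_{t+1},L_t) = \frac{1}{w_L}\sum_t\bigl(v_L(A_{L_t})-v_L(A_{L_{t+1}})\bigr)=0$. Hence the triangles have total weight exactly $w(\mathcal{C}')$, and each triangle compresses (by your own short-cutting step) to the length-$2$ cycle $(L_{t+1},S_t)$, so one of those must be positive. This avoids the reverse-cycle identity and is arguably slightly cleaner; the paper's version, on the other hand, makes the symmetry $w(\overleftarrow{\mathcal{C}})=w(\mathcal{C}')$ explicit, which is a nice structural fact in its own right.
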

\begin{proof}
    Since the ``if'' direction is trivial, we only focus on the ``only if'' direction, i.e., if $G_\mathbf{A}$ contains a positive-weight cycle, there must exists a positive-weight cycle with length $2$, in which the two agents are of different types.

    We call an edge $(i,j)$ in $G_\mathbf{A}$ \emph{cross-type} if agents $i$ and $j$ have different types.
    Note that any positive-weight cycle $\mathcal{C} = (i_0,i_1,\ldots,i_k)$ (where $i_0=i_k$) must contain at least one cross-type edge, because otherwise the total weight of the cycle is $0$: (suppose all agents in the cycle are large)
    \begin{equation*}
        w(\mathcal{C}) = \sum_{t=0}^{k-1} w(i_t,i_{t+1}) = \sum_{t=0}^{k-1}\left( \frac{v_L(A_{i_{t+1}})}{w_{L}} - \frac{v_L(A_{i_{t}})}{w_{L}} \right) = 0.
    \end{equation*}
    
    Next we show that any positive-weight cycle can be ``compressed'' into a cycle containing only cross-type edges, while preserving the total weight of the cycle.
    Let $\mathcal{L} = (i_0,i_1,\ldots,i_k)$ be a path contained in the cycle, where $(i_{k-1},i_{k})$ is a cross-type edge and all other edges are not.
    We further assume that agents $i_0,\ldots,i_{k-1}$ are large and $i_k$ is small (the other case can be proved in a similar way).
    We show that the total weight of the path $\mathcal{L} = (i_0,i_1,\ldots,i_k)$ is equal to the path $(i_0, i_k)$ with a single cross-type edge (see Figure~\ref{figure:path-compression} for an example):
    
    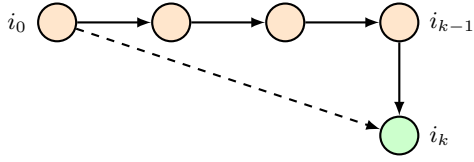
\begin{figure}[htb]
    \captionsetup{font=footnotesize}
    \centering
    \begin{tikzpicture}
     [
        node distance=1.5cm,
        on grid,
        thick,
        font=\small]
     
    \node
    [ 
        state,
        fill=orange!20,
        align=center,
        inner sep=1mm,
        minimum size=5mm,
    ] (q_0) [label=left: $i_0$]
    {
    };
     
    \node
    [
        state,
        fill=orange!20,
        align=center,
        inner sep=5pt,
        minimum size=5mm,
    ] (q_1) [right=of q_0] []
    {
    
    };
     
     \node
    [ 
        state,
        fill=orange!20,
        align=center,
        inner sep=1mm,
        minimum size=5mm,
    ] (q_2) [right=of q_1]
    {
    };
     
     \node
    [ 
        state,
        fill=orange!20,
        align=center,
        inner sep=1mm,
        minimum size=5mm,
    ] (q_3) [right=of q_2, label=right: $i_{k-1}$]
    {
    };
     
      \node
    [ 
        state,
        fill=green!20,
        align=center,
        inner sep=1mm,
        minimum size=5mm,
    ] (q_4) [below=of q_3, label=right: $i_{k}$]
    {
    };
    
    \path [-latex]
        (q_0) edge [bend left=0] 
            node [above] {} (q_1)
        (q_1) edge [bend left=0]   
            node [above] {} (q_2)
        (q_2) edge [bend left=0]   
            node [above] {} (q_3)
        (q_3) edge [bend left=0]   
            node [above] {} (q_4);
            
    \path [-latex, dashed]
      (q_0) edge [bend left=0]   
            node [above] {} (q_4); 
    \end{tikzpicture}
    \captionsetup{font=small}
    \caption{An example of path compression, where orange nodes represent the large agents and the green nodes represent the small agents.
    The edge $(i_{k-1}, i_k)$ and $(i_{0}, i_k)$ are the cross-type edges.}
    \label{figure:path-compression}
    \end{figure}
    
    \vspace{-15pt}
    
    \begin{align*}
        w(\mathcal{L}) & = \sum_{t=0}^{k-1} w(i_t,i_{t+1}) = 
        \sum_{t=0}^{k-2}\left( \frac{v_L(A_{i_{t+1}})}{w_{L}} - \frac{v_L(A_{i_{t}})}{w_{L}} \right) +
        \left(\frac{v_L(A_{i_{k}})}{w_{S}} - \frac{v_L(A_{i_{k-1}})}{w_{L}} \right)\\
        & = \left( \frac{v_L(A_{i_{k-1}})}{w_{L}} - \frac{v_L(A_{i_{0}})}{w_{L}} \right) +
        \left(\frac{v_L(A_{i_{k}})}{w_{S}} - \frac{v_L(A_{i_{k-1}})}{w_{L}} \right)
        = \frac{v_L(A_{i_{k}})}{w_{S}} - \frac{v_L(A_{i_{0}})}{w_{L}} = w(i_0,i_k).
    \end{align*}
    
    By repeatedly compressing paths as shown above, we obtain a cycle with only cross-type edges, whose total weight is the same as the original cycle.
    Let $\mathcal{C'} = (i_0,i_1,\ldots,i_k)$ be the resulting cycle, where $i_0 = i_k$, $k\geq 2$ is even, $i_0,i_2,\ldots,i_{k-2}$ are large agents and $i_1,i_3,\ldots,i_{k-1}$ are small agents.
    We further let $\overleftarrow{\mathcal{C}}$ be the cycle obtained by reversing the direction of all edges in $\mathcal{C'}$ (see Figure~\ref{figure:C-and-reverse-C} for an example).
    It can be verified that $w(\overleftarrow{\mathcal{C}}) = w(\mathcal{C'})$:
    \begin{align*}
        w(\mathcal{C'}) & = \sum_{\text{even } t<k} \left( \frac{v_L(A_{i_{t+1}})}{w_{S}} - \frac{v_L(A_{i_{t}})}{w_{L}} \right) +
        \sum_{\text{odd } t<k} \left( \frac{v_S(A_{i_{t+1}})}{w_{L}} - \frac{v_S(A_{i_{t}})}{w_{S}} \right) \\
        & = \sum_{\text{even } t<k} \left( \frac{v_S(A_{i_{t}})}{w_{L}} - \frac{v_L(A_{i_{t}})}{w_{L}} \right) +
        \sum_{\text{odd } t<k} \left( \frac{v_L(A_{i_{t}})}{w_{S}} - \frac{v_S(A_{i_{t}})}{w_{S}} \right) \\
        & = \sum_{\text{even } t<k} \left( \frac{v_S(A_{i_{t}})}{w_{L}} - \frac{v_S(A_{i_{t+1}})}{w_{S}} \right) +
        \sum_{\text{odd } t<k} \left( \frac{v_L(A_{i_{t}})}{w_{S}} - \frac{v_L(A_{i_{t+1}})}{w_{L}} \right) = w(\overleftarrow{\mathcal{C}}).
    \end{align*}
    
    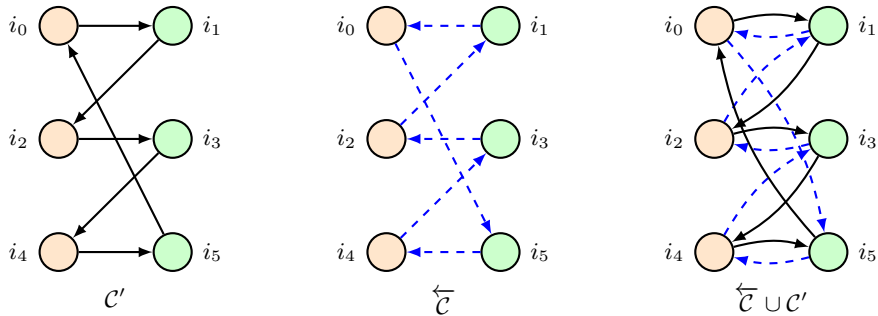
\begin{figure}[htb]
    \captionsetup{font=footnotesize}
    \centering
    \begin{tikzpicture}
     [
        node distance=1.5cm,
        on grid,
        thick,
        font=\small]
     
    \node
    [ 
        state,
        fill=orange!20,
        align=center,
        inner sep=1mm,
        minimum size=5mm,
    ] (q_0) [label=left: $i_0$]{};
     
    \node
    [
        state,
        fill=orange!20,
        align=center,
        inner sep=5pt,
        minimum size=5mm,
    ] (q_1) [below=of q_0, label=left: $i_2$] {};
     
     \node
    [ 
        state,
        fill=orange!20,
        align=center,
        inner sep=1mm,
        minimum size=5mm,
    ] (q_2) [below=of q_1, label=left: $i_4$]{};
     
     \node
    [ 
        state,
        fill=green!20,
        align=center,
        inner sep=1mm,
        minimum size=5mm,
    ] (q_3) [right=of q_0, label=right: $i_1$]{};
     
      \node
    [ 
        state,
        fill=green!20,
        align=center,
        inner sep=1mm,
        minimum size=5mm,
    ] (q_4) [below=of q_3, label=right: $i_3$]{};
    
      \node
    [ 
        state,
        fill=green!20,
        align=center,
        inner sep=1mm,
        minimum size=5mm,
    ] (q_5) [below=of q_4, label=right: $i_5$]{};

    \path [-latex]
        (q_0) edge [bend left=0] 
            node [above] {} (q_3)
        (q_3) edge [bend left=0]   
            node [above] {} (q_1)
        (q_1) edge [bend left=0]   
            node [above] {} (q_4)
        (q_4) edge [bend left=0]   
            node [above] {} (q_2)
        (q_2) edge [bend left=0]   
            node [below=1em] {$\mathcal{C}'$} (q_5)
        (q_5) edge [bend left=0]   
            node [above] {} (q_0)   
            ;
    
    \node
    [ 
        state,
        fill=orange!20,
        align=center,
        inner sep=1mm,
        minimum size=5mm,
    ] (q_6) [right=8em of q_3, label=left: $i_0$]{};
     
    \node
    [
        state,
        fill=orange!20,
        align=center,
        inner sep=5pt,
        minimum size=5mm,
    ] (q_7) [below=of q_6, label=left: $i_2$]{};
     
     \node
    [ 
        state,
        fill=orange!20,
        align=center,
        inner sep=1mm,
        minimum size=5mm,
    ] (q_8) [below=of q_7, label=left: $i_4$]{};
     
     \node
    [ 
        state,
        fill=green!20,
        align=center,
        inner sep=1mm,
        minimum size=5mm,
    ] (q_9) [right=of q_6, label=right: $i_1$]{};
     
      \node
    [ 
        state,
        fill=green!20,
        align=center,
        inner sep=1mm,
        minimum size=5mm,
    ] (q_10) [below=of q_9, label=right: $i_3$]{};
    
      \node
    [ 
        state,
        fill=green!20,
        align=center,
        inner sep=1mm,
        minimum size=5mm,
    ] (q_11) [below=of q_10, label=right: $i_5$]{};
     
    \path [-latex, dashed]
        (q_9) edge [bend left=0, color=blue] 
            node [above] {} (q_6)
        (q_7) edge [bend left=0, color=blue]   
            node [above] {} (q_9)
        (q_10) edge [bend left=0, color=blue]   
            node [above] {} (q_7)
        (q_8) edge [bend left=0, color=blue]   
            node [above] {} (q_10)
        (q_11) edge [bend left=0, color=blue]   
            node [below=1em,  color=black] {$\overleftarrow{\mathcal{C}}$} (q_8)
        (q_6) edge [bend left=0, color=blue]   
            node [above] {} (q_11)   
            ;
    
      \node
    [ 
        state,
        fill=orange!20,
        align=center,
        inner sep=1mm,
        minimum size=5mm,
    ] (q_12) [right=8em of q_9, label=left: $i_0$]{};
     
    \node
    [
        state,
        fill=orange!20,
        align=center,
        inner sep=5pt,
        minimum size=5mm,
    ] (q_13) [below=of q_12, label=left: $i_2$]{};
     
     \node
    [ 
        state,
        fill=orange!20,
        align=center,
        inner sep=1mm,
        minimum size=5mm,
    ] (q_14) [below=of q_13, label=left: $i_4$]{};
     
     \node
    [ 
        state,
        fill=green!20,
        align=center,
        inner sep=1mm,
        minimum size=5mm,
    ] (q_15) [right=of q_12, label=right: $i_1$]{};
     
      \node
    [ 
        state,
        fill=green!20,
        align=center,
        inner sep=1mm,
        minimum size=5mm,
    ] (q_16) [below=of q_15, label=right: $i_3$]{};
    
      \node
    [ 
        state,
        fill=green!20,
        align=center,
        inner sep=1mm,
        minimum size=5mm,
    ] (q_17) [below=of q_16, label=right: $i_5$]{};  
        
    \path [-latex, dashed]
        (q_15) edge [bend left=15, color=blue]
            node [above] {} (q_12)
        (q_13) edge [bend left=15, color=blue]   
            node [above] {} (q_15)
        (q_16) edge [bend left=15, color=blue]   
            node [above] {} (q_13)
        (q_14) edge [bend left=15, color=blue]   
            node [above] {} (q_16)
        (q_17) edge [bend left=15, color=blue]   
            node [below=.5em, color=black] {$\overleftarrow{\mathcal{C}}\cup \mathcal{C}'$} (q_14)
        (q_12) edge [bend left=15, color=blue]   
            node [above] {} (q_17)   
            ;
        
        \path [-latex]
        (q_12) edge [bend left=15] 
            node [above] {} (q_15)
        (q_15) edge [bend left=15]   
            node [above] {} (q_13)
        (q_13) edge [bend left=15]   
            node [above] {} (q_16)
        (q_16) edge [bend left=15]   
            node [above] {} (q_14)
        (q_14) edge [bend left=15]   
            node [below=1em] {} (q_17)
        (q_17) edge [bend left=15]   
            node [above] {} (q_12)   
            ;
    \end{tikzpicture}
    \captionsetup{font=small}
    \caption{Example of $\overleftarrow{\mathcal{C}}$ and $\mathcal{C}'$, where the black edges are from $\mathcal{C}'$ and blue edges are from $\overleftarrow{\mathcal{C}}$.}
    \label{figure:C-and-reverse-C}
    \end{figure}
    Therefore we have $w(\overleftarrow{\mathcal{C}}) > 0$.
    Since $\overleftarrow{\mathcal{C}}\cup \mathcal{C}'$ can be regarded as a collection of length-$2$ cycles, each of which has one large agent and one small agent, at least one of them has positive weight, as claimed in the lemma.
\end{proof}

Given the above lemma, to compute an allocation that is WEFable, it suffices to make sure that there is no length-$2$ positive-weight cycle consists of a large agent and a small agent in the corresponding weighted envy graph.

\begin{lemma} \label{lemma:two-types-agents-Yes-case}
If the condition in line 2 of Algorithm~\ref{Alg:two-types-of-agents} is satisfied, then the allocation $\mathbf{A}$ returned by the algorithm is WEFable.
\end{lemma}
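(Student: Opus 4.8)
The plan is to invoke Lemma~\ref{lemma:length-2-cycle-suffices}, which reduces the task to checking that no length-$2$ cycle in $G_\mathbf{A}$ between a large agent and a small agent has positive weight. So I would fix an arbitrary large agent $i\in N_L$ and an arbitrary small agent $j\in N_S$ in the returned allocation $\mathbf{A}$, and show $w(i,j)+w(j,i)\leq 0$.

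First I would write out the weight of this length-$2$ cycle explicitly. Since $i$ receives some house $h$ from the block $\{h_1,\dots,h_{n_L}\}$ and $j$ receives some house $h'$ from the block $\{h_{m-n_S+1},\dots,h_m\}$, we have
\begin{equation*}
w(i,j)+w(j,i) = \left(\frac{v_L(h')}{w_S} - \frac{v_L(h)}{w_L}\right) + \left(\frac{v_S(h)}{w_L} - \frac{v_S(h')}{w_S}\right) = \frac{1}{w_S}\bigl(v_L(h')-v_S(h')\bigr) - \frac{1}{w_L}\bigl(v_L(h)-v_S(h)\bigr).
\end{equation*}
So the cycle is non-positive exactly when $\frac{1}{w_L}(v_L(h)-v_S(h)) \geq \frac{1}{w_S}(v_L(h')-v_S(h'))$. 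Now I use the sorting: because the houses are indexed in descending order of $v_L(\cdot)-v_S(\cdot)$, and $h\in\{h_1,\dots,h_{n_L}\}$ while $h'\in\{h_{m-n_S+1},\dots,h_m\}$, we get $v_L(h)-v_S(h) \geq v_L(h_{n_L})-v_S(h_{n_L})$ and $v_L(h')-v_S(h') \leq v_L(h_{m-n_S+1})-v_S(h_{m-n_S+1})$. (Here I should note $m-n_S+1 > n_L$ since $n_L+n_S=n\leq m$, so the two blocks are disjoint and the indices are ordered correctly — this is the one bookkeeping point worth stating carefully.) Combining these two monotonicity facts with the hypothesis of line~2,
\begin{equation*}
\frac{1}{w_L}\bigl(v_L(h)-v_S(h)\bigr) \geq \frac{1}{w_L}\bigl(v_L(h_{n_L})-v_S(h_{n_L})\bigr) \geq \frac{1}{w_S}\bigl(v_L(h_{m-n_S+1})-v_S(h_{m-n_S+1})\bigr) \geq \frac{1}{w_S}\bigl(v_L(h')-v_S(h')\bigr),
\end{equation*}
which gives $w(i,j)+w(j,i)\leq 0$ as required.

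Then I would conclude: since every length-$2$ cross-type cycle has non-positive weight, by Lemma~\ref{lemma:length-2-cycle-suffices} the graph $G_\mathbf{A}$ has no positive-weight cycle at all, and by Theorem~\ref{theorem:WEFable-properties} the allocation $\mathbf{A}$ is WEFable. I don't anticipate a serious obstacle here; the only subtlety is making sure the direction of the two monotonicity inequalities is pinned to the right endpoints of the two house-blocks (the worst large house is $h_{n_L}$, the best small house is $h_{m-n_S+1}$), since that is precisely why the single scalar comparison in line~2 suffices to dominate all $n_L\cdot n_S$ cross-type cycles simultaneously. A minor point to handle cleanly is the within-type cycles, but Lemma~\ref{lemma:length-2-cycle-suffices} already tells us those are always zero-weight, so no argument is needed there beyond citing the lemma.
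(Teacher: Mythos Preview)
Your proposal is correct and follows essentially the same approach as the paper: fix an arbitrary large agent $i$ and small agent $j$, rearrange $w(i,j)+w(j,i)$ into $\frac{v_L(A_j)-v_S(A_j)}{w_S}-\frac{v_L(A_i)-v_S(A_i)}{w_L}$, bound each term by the extreme house in its block ($h_{m-n_S+1}$ and $h_{n_L}$ respectively) via the sorting, apply the line~2 condition, and finish with Lemma~\ref{lemma:length-2-cycle-suffices}. Your version is slightly more explicit (chaining three inequalities rather than bounding both terms in one step, and noting the disjointness of the blocks), but the argument is the same.
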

\begin{proof}
    Fix an arbitrary large agent $i\in N_L$ and small agent $j\in N_S$.
    We show that the cycle $(i,j,i)$ has non-positive weight:
    \begin{align*}
        w(i,j) + w(j,i) & = \frac{v_L(A_j)}{w_S} - \frac{v_L(A_i)}{w_L}
        + \frac{v_S(A_i)}{w_L} - \frac{v_S(A_j)}{w_S} \\
        & = \frac{v_L(A_j) - v_S(A_j)}{w_S} - \frac{v_L(A_i) - v_S(A_i)}{w_L} \\
        & \leq \frac{v_L(h_{m-n_S+1}) - v_S(h_{m-n_S+1})}{w_S} - \frac{v_L(h_{n_L}) - v_S(h_{n_L})}{w_L} \leq 0,
    \end{align*}
    where the first inequality holds because $h_{m-n_S+1}$ is the house with maximum value of $v_L(h)-v_S(h)$ that is allocated to small agents and $h_{n_L}$ is the house with minimum value of $v_L(h)-v_S(h)$ that is allocated to large agents; the last inequality holds by the ``If'' condition in line 2 of Algorithm~\ref{Alg:two-types-of-agents}.
    
    Since $i$ and $j$ are fixed arbitrarily, we have shown that there is no length-$2$ positive-weight cycle consists of a large agent and a small agent in $G_\mathbf{A}$.
    Then by Lemma~\ref{lemma:length-2-cycle-suffices}, we conclude that $G_\mathbf{A}$ has no positive-weight cycle, and thus $\mathbf{A}$ is WEFable.
\end{proof}

Next we show that if Algorithm~\ref{Alg:two-types-of-agents} does not output any allocation, then the instance does not admit any WEFable allocations.

\begin{lemma} \label{lemma:two-types-agents-NO-case}
If the condition in line 2 of Algorithm~\ref{Alg:two-types-of-agents} is not satisfied, then all allocations for the instance are not WEFable.
\end{lemma}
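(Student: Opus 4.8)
The plan is to show the contrapositive: if some allocation $\mathbf{B}$ is WEFable, then the condition in line 2 must hold. By Lemma~\ref{lemma:length-2-cycle-suffices}, a WEFable allocation $\mathbf{B}$ has no positive-weight length-$2$ cycle between any large agent and any small agent, i.e., for every $i\in N_L$ and $j\in N_S$,
\begin{equation*}
    \frac{v_L(B_j) - v_S(B_j)}{w_S} \le \frac{v_L(B_i) - v_S(B_i)}{w_L}.
\end{equation*}
Taking the maximum over $j\in N_S$ on the left and the minimum over $i\in N_L$ on the right, it suffices to lower-bound $\min_{i\in N_L}(v_L(B_i)-v_S(B_i))$ and upper-bound $\max_{j\in N_S}(v_L(B_j)-v_S(B_j))$ over the houses actually used by $\mathbf{B}$.

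The key combinatorial observation is that, whatever $n_L$ houses $\mathbf{B}$ gives to the large agents, the smallest value of $v_L(h)-v_S(h)$ among them is at most $v_L(h_{n_L})-v_S(h_{n_L})$: since the houses are sorted in descending order of $v_L(h)-v_S(h)$, any $n_L$-subset of houses contains at least one house with index $\ge n_L$, hence with $v_L(h)-v_S(h)\le v_L(h_{n_L})-v_S(h_{n_L})$. Symmetrically, whatever $n_S$ houses $\mathbf{B}$ gives to the small agents, the largest value of $v_L(h)-v_S(h)$ among them is at least $v_L(h_{m-n_S+1})-v_S(h_{m-n_S+1})$, because any $n_S$-subset contains a house with index $\le m-n_S+1$. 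Therefore
\begin{equation*}
    \min_{i\in N_L}\frac{v_L(B_i)-v_S(B_i)}{w_L} \le \frac{v_L(h_{n_L})-v_S(h_{n_L})}{w_L}, \qquad \max_{j\in N_S}\frac{v_L(B_j)-v_S(B_j)}{w_S} \ge \frac{v_L(h_{m-n_S+1})-v_S(h_{m-n_S+1})}{w_S}.
\end{equation*}
Combining these with the no-positive-cycle inequality above (instantiated at the $i$ achieving the left minimum and the $j$ achieving the right maximum) yields
\begin{equation*}
    \frac{v_L(h_{m-n_S+1})-v_S(h_{m-n_S+1})}{w_S} \le \frac{v_L(h_{n_L})-v_S(h_{n_L})}{w_L},
\end{equation*}
which is exactly the condition in line 2. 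This proves the contrapositive, hence the lemma.

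I expect the only real subtlety to be the pigeonhole argument on indices: one must be careful that $1\le n_L$ and $n_S\le m-n_L$ (which follow from $n_L+n_S=n\le m$ and both types being nonempty — if either type is empty the lemma is vacuous since the condition is trivially checked or there is no cross-type cycle to worry about), so that the indices $h_{n_L}$ and $h_{m-n_S+1}$ are well-defined and indeed $n_L \le m-n_S+1$, guaranteeing the two ranges of "large houses'' (indices $1,\dots,n_L$) and "small houses'' (indices $m-n_S+1,\dots,m$) are consistent with the extremal bounds claimed. Everything else is a short chain of inequalities combined with Lemma~\ref{lemma:length-2-cycle-suffices}.
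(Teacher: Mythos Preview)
Your proof is correct and is essentially the contrapositive of the paper's argument: the paper assumes the condition fails, defines $\alpha=v_L(h_{n_L})-v_S(h_{n_L})$ and $\beta=v_L(h_{m-n_S+1})-v_S(h_{m-n_S+1})$, and uses the same pigeonhole observation (at most $n_L-1$ houses have difference $>\alpha$, at most $n_S-1$ have difference $<\beta$) to exhibit a positive-weight length-$2$ cycle in any allocation. One small remark: you do not actually need Lemma~\ref{lemma:length-2-cycle-suffices} here, since WEFable already implies \emph{all} cycles (in particular length-$2$ ones) are non-positive by Theorem~\ref{theorem:WEFable-properties}; the paper likewise only uses that direction.
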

\begin{proof}
    Let $\alpha = v_L(h_{n_L}) - v_S(h_{n_L})$ and $\beta = v_L(h_{m-n_S+1}) - v_S(h_{m-n_S+1})$. By assumption we have ${\alpha}/{w_L} < {\beta}/{w_S}$.
    By definition the number of houses $h$ with $v_L(h) - v_S(h) > \alpha$ is at most $n_L-1$; the number of houses with $v_L(h) - v_S(h) < \beta$ is at most $n_S-1$.
    
    Consider any allocation $\mathbf{A}$.
    Since there are $n_L$ large agents and $n_S$ small agents, there must exist $i\in N_L$ with $v_L(A_i) - v_S(A_i) \leq \alpha$ and $j\in N_S$ with $v_L(A_j) - v_S(A_j) \geq \beta$.
    Then we have
    \begin{equation*}
        w(i,j) + w(j,i) = \frac{v_L(A_j)}{w_S} - \frac{v_L(A_i)}{w_L}
        + \frac{v_S(A_i)}{w_L} - \frac{v_S(A_j)}{w_S} \geq \frac{\beta}{w_S} - \frac{\alpha}{w_L} > 0,
    \end{equation*}
    which implies that $\mathbf{A}$ is not WEFable and the instance does not admit any WEFable allocations.
\end{proof}

Putting the lemmas together, we prove the following.

\begin{theorem} \label{theorem:decide-WEFable-two-types}
Algorithm~\ref{Alg:two-types-of-agents} is a polynomial time algorithm that decides whether weighted envy-freeable allocations exist and, if so, computes one such allocation.
\end{theorem}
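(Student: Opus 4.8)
The plan is to assemble Theorem~\ref{theorem:decide-WEFable-two-types} directly from the three lemmas already established, together with a routine running-time analysis of Algorithm~\ref{Alg:two-types-of-agents}. The correctness half splits into exactly two cases according to the test in line~2. If the condition $\frac{1}{w_L}(v_L(h_{n_L}) - v_S(h_{n_L})) \geq \frac{1}{w_S}(v_L(h_{m-n_S+1}) - v_S(h_{m-n_S+1}))$ holds, then by Lemma~\ref{lemma:two-types-agents-Yes-case} the allocation $\mathbf{A}$ produced in line~3 is WEFable, so the algorithm correctly outputs a valid allocation. If the condition fails, then by Lemma~\ref{lemma:two-types-agents-NO-case} \emph{no} allocation for the instance is WEFable, so outputting ``No such allocation'' is correct. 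These two cases are exhaustive, so the algorithm is correct on every input.

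For the running time, I would argue that every line of Algorithm~\ref{Alg:two-types-of-agents} is cheap. Line~1 sorts the $m$ houses by the key $v_L(h) - v_S(h)$, which takes $O(m\log m)$ time (after an $O(m)$ pass to compute the keys). The comparison in line~2 is $O(1)$ once the sorted order is available, since it only inspects the entries at positions $n_L$ and $m - n_S + 1$. Line~3 assigns the top $n_L$ houses to the $n_L$ large agents and the bottom $n_S$ houses to the $n_S$ small agents in an arbitrary but fixed one-to-one fashion, which is $O(n)$. Hence the whole algorithm runs in $O(m\log m)$ time, which is polynomial. I would state this bound explicitly in the proof.

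There is essentially no hard step here: the real content was proved in Lemmas~\ref{lemma:length-2-cycle-suffices}, \ref{lemma:two-types-agents-Yes-case}, and~\ref{lemma:two-types-agents-NO-case}, and the theorem is just the bookkeeping that combines them. The only place where a little care is needed is making sure the two cases in line~2 genuinely partition all inputs (they do, since the inequality either holds or fails) and that the allocation described in line~3 is well-defined --- i.e., that $n_L \leq m$ and $n_S \leq m - n_L$ so that the two blocks of houses $\{h_1,\ldots,h_{n_L}\}$ and $\{h_{m-n_S+1},\ldots,h_m\}$ are disjoint and each has the right size. This follows from $n_L + n_S = n \leq m$, which is part of the problem setup. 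I would note this disjointness observation in passing and then conclude.

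\begin{proof}
We first argue correctness. Line~1 of Algorithm~\ref{Alg:two-types-of-agents} computes a descending sort of the houses by $v_L(h) - v_S(h)$; this is well-defined and, since $n_L + n_S = n \leq m$, the index sets $\{1,\ldots,n_L\}$ and $\{m-n_S+1,\ldots,m\}$ are disjoint, so the assignment in line~3 allocates distinct houses and each agent receives exactly one house. Now consider the two possible outcomes of the test in line~2, which are exhaustive. If the condition
\begin{equation*}
    \frac{1}{w_L}\bigl(v_L(h_{n_L}) - v_S(h_{n_L})\bigr) \geq \frac{1}{w_S}\bigl(v_L(h_{m-n_S+1}) - v_S(h_{m-n_S+1})\bigr)
\end{equation*}
holds, then Lemma~\ref{lemma:two-types-agents-Yes-case} guarantees that the returned allocation $\mathbf{A}$ is WEFable, so the output is correct. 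If the condition fails, then Lemma~\ref{lemma:two-types-agents-NO-case} guarantees that no allocation for the instance is WEFable, so outputting ``No such allocation'' is correct. Hence the algorithm decides the existence of WEFable allocations and computes one whenever one exists.

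For the running time, computing the keys $v_L(h) - v_S(h)$ for all $h \in H$ takes $O(m)$ time, and sorting them takes $O(m \log m)$ time. The comparison in line~2 inspects only the entries at positions $n_L$ and $m - n_S + 1$ of the sorted list and runs in $O(1)$ time. The assignment in line~3 runs in $O(n)$ time. Therefore Algorithm~\ref{Alg:two-types-of-agents} runs in $O(m \log m)$ time, which is polynomial in the input size.
\end{proof}
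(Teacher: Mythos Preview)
Your proposal is correct and follows essentially the same approach as the paper: correctness is obtained directly from Lemmas~\ref{lemma:two-types-agents-Yes-case} and~\ref{lemma:two-types-agents-NO-case}, and the running time is dominated by the $O(m\log m)$ sort in line~1. Your write-up is in fact slightly more careful than the paper's, since you explicitly check that the two blocks of houses in line~3 are disjoint (using $n_L+n_S=n\le m$), which the paper leaves implicit.
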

\begin{proof}
    By Lemma~\ref{lemma:two-types-agents-Yes-case} and~\ref{lemma:two-types-agents-NO-case}, Algorithm~\ref{Alg:two-types-of-agents} outputs a WEFable allocation if and only if the instance admits WEFable allocations. Thus the correctness of the algorithm follows.
    The core of the algorithm relies on sorting all houses based on $v_L(h) - v_S(h)$, which can be done in $O(m\log m)$ time.
    The other parts of the algorithm can be trivially done in $O(n)$ time, which completes the analysis for the polynomial running time, and the theorem follows.
\end{proof}

\subsection{Bi-valued Utility Functions}

Finally, we consider the bi-valued instances, in which we have $v_i(h) \in \{\epsilon,1\}$ for all $i\in N$ and $h\in H$, where $\epsilon \in [0,1)$.
We further assume that $m=n$.
Note that even for bi-valued instances, weighted envy-freeable allocations are not guaranteed to exist, e.g., see Example~\ref{example:Hard-Instance}.
Given a bi-valued instance $\mathcal{I}$, it is often more convenient to work with the graph $G_\mathcal{I}=(N,H,E)$, where we put an edge $(i,h)\in E$ between agent $i$ and house $h$ if and only if $v_i(h) = 1$.
We call the graph $G_\mathcal{I}$ a representing graph of the instance $\mathcal{I}$.
Every allocation for the instance can be represented by a matching in the representing graph, where the matched agents receive houses of value $1$ and each unmatched agent receives an arbitrary unique unmatched house (which has value $\epsilon$ to the agent).
Throughout, we call an agent \emph{matched} under allocation $\mathbf{A}$ if $v_i(A_i)=1$.

\begin{lemma}
\label{lemma:max-SW=PO}
For bi-valued instances, the allocation $\mathbf{A}$ is Pareto optimal if and only if the allocation $\mathbf{A}$ maximizes the number of matched agents.
\end{lemma}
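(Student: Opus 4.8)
The plan is to reduce Pareto optimality of a bi-valued allocation to a purely combinatorial statement about the matching it induces in the representing graph $G_\mathcal{I}=(N,H,E)$, and then invoke the classical augmenting-path theory of bipartite matchings.

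First I would record the observation that makes bi-valued instances tractable: the utility vector of an allocation $\mathbf{A}$ is completely determined by its set of matched agents $M(\mathbf{A}):=\{i\in N: v_i(A_i)=1\}$, since every matched agent has utility $1$ and every unmatched agent has utility $\epsilon$. Because $\epsilon<1$, this gives the clean dictionary: an allocation $\mathbf{A}'$ Pareto dominates $\mathbf{A}$ if and only if $M(\mathbf{A})\subsetneq M(\mathbf{A}')$ as subsets of $N$. (Indeed, $v_i(A_i')\ge v_i(A_i)$ for all $i$ forces every agent with value $1$ under $\mathbf{A}$ to have value $1$ under $\mathbf{A}'$, and strict improvement for some $j$ forces $v_j(A_j)=\epsilon$ and $v_j(A_j')=1$.) In particular, domination implies $|M(\mathbf{A}')|>|M(\mathbf{A})|$, and $|M(\mathbf{A})|$ is precisely the size of the matching in $G_\mathcal{I}$ obtained by restricting the perfect matching $\{(i,A_i):i\in N\}$ to the edges of $E$; conversely any matching in $G_\mathcal{I}$ extends (using $m=n$) to a full allocation with the same set of matched agents.

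Given this dictionary, the ``$\Leftarrow$'' direction is immediate: if $\mathbf{A}$ maximizes the number of matched agents and some $\mathbf{A}'$ Pareto-dominated it, then $\mathbf{A}'$ would have strictly more matched agents, a contradiction, so $\mathbf{A}$ is PO. For ``$\Rightarrow$'' I would argue by contraposition. Suppose $\mathbf{A}$ does not maximize the number of matched agents, i.e. the induced matching $M$ in $G_\mathcal{I}$ is not maximum. By Berge's theorem there is an $M$-augmenting path $P$; set $M':=M\triangle P$, so $|M'|=|M|+1$ and, crucially, $V(M)\subseteq V(M')$ — every vertex already matched by $M$ remains matched after augmenting, and the two endpoints of $P$ (one in $N$, one in $H$, since $P$ has odd length in the bipartite graph) become matched. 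Extend $M'$ to a full allocation $\mathbf{A}'$ by assigning the remaining houses bijectively to the remaining agents. Then every agent matched under $\mathbf{A}$ still has value $1$ under $\mathbf{A}'$, and the $N$-endpoint of $P$ improves from $\epsilon$ to $1$; hence $\mathbf{A}'$ Pareto dominates $\mathbf{A}$, so $\mathbf{A}$ is not PO.

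The only step needing care is the augmenting-path argument: one must verify that no previously matched agent becomes unmatched after augmenting, so that genuinely $M(\mathbf{A})\subseteq M(\mathbf{A}')$ and not merely $|M(\mathbf{A})|<|M(\mathbf{A}')|$ — this monotonicity (together with $\epsilon<1$) is exactly what upgrades ``strictly larger matching'' to ``Pareto improvement''. I would therefore state explicitly that augmentation along $P$ preserves the matched status of all internal vertices of $P$ and of all vertices outside $P$, and that $m=n$ guarantees a partial matching can always be completed to an allocation. Everything else is bookkeeping in the dictionary established in the first step.
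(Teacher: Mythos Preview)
Your proof is correct and follows essentially the same approach as the paper: the easy direction uses that any Pareto improvement strictly increases the number of matched agents, and the other direction invokes Berge's theorem to obtain an augmenting path whose application yields a Pareto-dominating allocation. Your write-up is more explicit than the paper's about the dictionary between Pareto domination and strict containment of matched-agent sets, and about why augmentation preserves the matched status of all previously matched vertices, but the underlying argument is identical.
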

\begin{proof}
The ``if'' direction is trivial because any Pareto improvement of the allocation $\mathbf{A}$ gives an allocation with a strictly larger number of matched agents.

Next we show the ``only if'' direction, i.e., any PO allocation $\mathbf{A}$ maximizes the number of matched agents.
Suppose otherwise, i.e., there exists an allocation $\mathbf{A}$ that is PO but the corresponding matching $\mathcal{M}$ is not maximum.
Berge's theorem~\cite{berge1957two} implies that there will be an augmenting path in $\mathcal{M}$, applying which to $\mathcal{M}$ results in a new matching $\mathcal{M}'$, in which all agents that are matched in $\mathcal{M}$ remain matched, and some unmatched agent in $\mathcal{M}$ becomes matched in $\mathcal{M'}$.
Therefore, the allocation corresponding to the augmented matching $\mathcal{M'}$ Pareto dominates $\mathbf{A}$, which contradicts the fact that the allocation $\mathbf{A}$ is PO.
\end{proof}

The above lemma implies that every PO allocation corresponds to a maximum matching in the representing graph.
Next we show that every WEFable allocation must be PO.

\begin{lemma} \label{lemma:WEFable-implies-PO}
For bi-valued instances with $m=n$, every WEFable allocation is Pareto optimal.
\end{lemma}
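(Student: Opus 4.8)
The plan is to prove the contrapositive: if an allocation $\mathbf{A}$ is not Pareto optimal, then it is not WEFable. By Lemma~\ref{lemma:max-SW=PO}, a non-PO allocation does not maximize the number of matched agents, so (via Berge's theorem, as in the proof of that lemma) there is an augmenting path in the corresponding matching $\mathcal{M}$. I would take the shortest such augmenting path and use its vertices to exhibit a positive-weight cycle in $G_\mathbf{A}$, which by Theorem~\ref{theorem:WEFable-properties} shows $\mathbf{A}$ is not WEFable.

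Concretely, let the augmenting path alternate between agents and houses, starting and ending at unmatched vertices. Since $m=n$, an unmatched agent holds a house of value $\epsilon$ to her, and an unmatched house has value $1$ to the agent at the other end of each unmatched edge on the path. Following the path, each agent $i$ on it (other than the endpoint) is currently matched to some house $A_i$ with $v_i(A_i)=1$, but the next agent $j$ along the path holds a house $A_j$ with $v_i(A_j)=1$ as well (because the connecting edge is in the representing graph). I would then consider the cycle in $G_\mathbf{A}$ that walks forward along the agents of the augmenting path from the first (unmatched) agent $i_0$ to the last agent $i_k$ (also unmatched) — wait, an augmenting path has agent endpoints only if it has odd length in edges; more carefully, one endpoint is an unmatched agent and the other an unmatched house, so the path is $i_0, h_1, i_1, h_2, \ldots, i_{k}, h_{k+1}$ where $i_0$ is unmatched, $h_{k+1}$ is unmatched, $A_{i_t} = h_t$ for $t\ge 1$, and $v_{i_t}(h_{t+1})=1$ for all $t$. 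Then $i_0$ envies $i_1$ (she gets value $\epsilon$, could get value $1$ from $A_{i_1}=h_1$), and each $i_t$ "envies sideways" to $i_{t+1}$. To close a cycle I would add the edge from $i_{k}$ back to $i_0$: but $i_0$'s house has value $\epsilon$ to $i_k$, so $w(i_k, i_0)$ may be negative. The trick is to compare the gains against a baseline: along the forward edges, each $w(i_{t}, i_{t+1}) = v_{i_t}(A_{i_{t+1}})/w_{i_{t+1}} - v_{i_t}(A_{i_t})/w_{i_t}$; since $v_{i_t}(A_{i_{t+1}}) = 1 = v_{i_t}(A_{i_t})$ for $t\ge 1$ these equal $1/w_{i_{t+1}} - 1/w_{i_t}$, which telescopes, and for $t=0$, $w(i_0,i_1) = 1/w_{i_1} - \epsilon/w_{i_0}$. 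Summing forward gives $1/w_{i_k} - \epsilon/w_{i_0}$ plus a telescoped remainder — and the closing edge $w(i_k,i_0) = \epsilon/w_{i_0} - 1/w_{i_k}$ (since $v_{i_k}(A_{i_0})=\epsilon$, $v_{i_k}(A_{i_k})=1$), so the total is... zero. So a single augmenting path alone gives a zero-weight cycle, not positive.

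This means the argument must be more subtle, and I suspect the actual proof strengthens Lemma~\ref{lemma:WEFable-implies-PO} using $\epsilon < 1$ strictly together with a carefully chosen permutation: given a non-PO allocation, construct the permutation $\sigma$ that shifts houses along the augmenting path (so the unmatched agent $i_0$ takes $h_1$, each $i_t$ takes $h_{t+1}$, and $h_{k+1}$ — formerly unowned — gets absorbed, with $i_0$'s old $\epsilon$-house going to whichever agent held $h_{k+1}$, i.e., nobody, contradiction since all houses are owned when $m=n$). Actually with $m=n$ every house is allocated, so the augmenting path's terminal "unmatched house" is still held by some agent who values it at $\epsilon$; that agent, call her $i_{k+1}$, becomes the new tail and we get a genuine cycle with the closing edge contributing $v_{i_{k+1}}(A_{i_0})/w_{i_0} - 1/w_{i_{k+1}} \le 1/w_{i_0} - 1/w_{i_{k+1}}$ — still not obviously positive. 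I expect the main obstacle to be exactly this: pinning down the correct cycle/permutation so that the strict inequality $v < 1$ at the right place forces strict positivity. The cleanest route is probably to observe that in a non-PO allocation there are two agents $i, j$ with $v_i(A_i)=\epsilon$, $v_i(A_j)=1$ (the start of an augmenting path touches such a pair) and $v_j(A_j)=1$, $v_j(A_i)\le 1$; then $w(i,j)+w(j,i) = (v_i(A_j)-v_j(A_j))/w_j + (v_j(A_i)-v_i(A_i))/w_i = (1-1)/w_j + (v_j(A_i)-\epsilon)/w_i = (v_j(A_i)-\epsilon)/w_i$, which is positive only if $v_j(A_i)=1$. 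So I would instead require the augmenting path to be chosen so that this holds — equivalently, pick $i$ to be an unmatched agent adjacent (in the representing graph) to a house whose holder is also matched via a length-2 situation; if no such configuration exists the maximum matching structure must be analyzed directly. Given the delicacy, I anticipate the published proof either adds the hypothesis that $\mathbf{A}$ is additionally assumed to maximize matched agents is false in a structured way, or it routes through a minimal augmenting path plus the permutation-resistance characterization, and the genuinely hard step is verifying that compressing/reversing the resulting cycle (à la Lemma~\ref{lemma:length-2-cycle-suffices}) isolates a strictly positive length-2 cycle using $\epsilon<1$.
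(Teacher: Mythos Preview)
Your second attempt is exactly the paper's approach, and it works --- you abandoned it because of an arithmetic slip. When you bring in the agent $i_{k+1}$ who holds the terminal ``unmatched'' house $h_{k+1}$, that agent is \emph{unmatched} in the representing-graph sense, so $v_{i_{k+1}}(A_{i_{k+1}}) = \epsilon$, not $1$. Hence the closing edge is
\[
w(i_{k+1}, i_0) = \frac{v_{i_{k+1}}(A_{i_0})}{w_{i_0}} - \frac{\epsilon}{w_{i_{k+1}}},
\]
not $\frac{v_{i_{k+1}}(A_{i_0})}{w_{i_0}} - \frac{1}{w_{i_{k+1}}}$ as you wrote. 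With this correction, summing around the cycle $i_0 \to i_1 \to \cdots \to i_k \to i_{k+1} \to i_0$ (the intermediate edges telescope exactly as you computed) gives total weight
\[
\frac{1-\epsilon}{w_{i_{k+1}}} + \frac{v_{i_{k+1}}(A_{i_0}) - \epsilon}{w_{i_0}},
\]
which is strictly positive since $\epsilon < 1$ and $v_{i_{k+1}}(A_{i_0}) \in \{\epsilon,1\}$. So $G_\mathbf{A}$ has a positive-weight cycle and $\mathbf{A}$ is not WEFable.

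The paper does precisely this, phrased in the permutation-resistance language: it defines the cyclic permutation $\sigma$ sending $i_t \mapsto i_{t+1}$ for $t=0,\ldots,k$, $i_{k+1}\mapsto i_0$, and everyone else to themselves, then checks that $\sum_i v_i(A_{\sigma(i)})/w_{\sigma(i)} > \sum_i v_i(A_i)/w_i$. This is the same computation, just via the equivalent characterization in Theorem~\ref{theorem:WEFable-properties}. All of your subsequent speculation (shortest paths, length-$2$ cycles, reversals) is unnecessary.
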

\begin{proof}
We prove the lemma by contradiction.
Suppose that there exists an allocation $\mathbf{A}$ that is WEFable but not PO.
By Lemma~\ref{lemma:max-SW=PO}, the matching corresponding to $\mathbf{A}$ is not maximum, and thus admits an augmenting path, applying which on $\mathbf{A}$ gives a new allocation $\mathbf{B}$.

Let $(i_0, i_1, \ldots, i_k)$ represent the augmenting path, where all agents $\{i_1,\ldots,i_k\}$ are matched in both $\mathbf{A}$ and $\mathbf{B}$, while agent $i_0$ is matched in $\mathbf{B}$ but not in $\mathbf{A}$.
The agents form a path in the sense that in allocation $\mathbf{B}$, agent $i_0$ claims the house $A_{i_1}$ that was allocated to $i_1$; agent $i_1$ claims house $A_{i_2}$, etc.
Since $m=n$, the house $i_k$ claims in $\mathbf{B}$ must be allocated to some agent (which we call $i_{k+1}$) in $\mathbf{A}$.
Moreover we must have $v_{i_{k+1}}(A_{i_{k+1}}) = 0$ as otherwise it is not an augmenting path.
In summary, for all $t=0,1,\ldots,k-1$, agent $i_t$ receives house $B_{i_t} = A_{i_{t+1}}$, where $v_{i_t}(A_{i_{t+1}}) = v_{i_{t+1}}(A_{i_{t+1}}) = 1$, while agent $i_k$ receives house $B_{i_k} = A_{i_{k+1}}$, where $v_{i_k}(A_{i_{k+1}}) = 1$ but $v_{i_{k+1}}(A_{i_{k+1}}) = 0$ (see Figure~\ref{figure:augmenting-path} for an example with $k=2$).

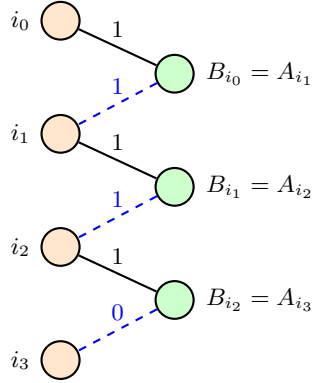
\begin{figure}[htb]
    \captionsetup{font=footnotesize}
    \centering
    \begin{tikzpicture}
     [
        node distance=1.5cm,
        on grid,
        thick,
        font=\small]
     
    \node
    [ 
        state,
        fill=orange!20,
        align=center,
        inner sep=1mm,
        minimum size=5mm,
    ] (q_0) [label=left: $i_0$]{};
     
    \node
    [
        state,
        fill=orange!20,
        align=center,
        inner sep=5pt,
        minimum size=5mm,
    ] (q_1) [below=of q_0, label=left: $i_1$] {};
     
     \node
    [ 
        state,
        fill=orange!20,
        align=center,
        inner sep=1mm,
        minimum size=5mm,
    ] (q_2) [below=of q_1, label=left: $i_2$]{};
     
     \node 
    [ 
        state,
        fill=green!20,
        align=center,
        inner sep=1mm,
        minimum size=5mm,
   ] (q_3) [below=2em of q_0, right of=q_0 ]{};
     \node [right=3.2em of q_3]{$B_{i_0}=A_{i_1}$};
    
      \node
    [ 
        state,
        fill=green!20,
        align=center,
        inner sep=1mm,
        minimum size=5mm,
    ] (q_4) [below=of q_3]{};
    \node [right=3.2em of q_4]{$B_{i_1}=A_{i_2}$};
    
    \node
    [ 
        state,
        fill=green!20,
        align=center,
        inner sep=1mm,
        minimum size=5mm,
    ] (q_5) [below=of q_4]{};
    \node [right=3.2em of q_5]{$B_{i_2}=A_{i_3}$};
    
    \node
    [ 
        state,
        fill=orange!20,
        align=center,
        inner sep=1mm,
        minimum size=5mm,
    ] (q_6) [below=of q_2, label=left: $i_3$]{};
        
    \path [-, dashed]
        (q_3) edge [bend left=0, color=blue]
            node [above] {$1$} (q_1)
        (q_4) edge [bend left=0, color=blue]   
            node [above] {$1$} (q_2)
        (q_5) edge [bend left=0, color=blue]   
            node [above] {$0$}(q_6)
            ;
        \path [-]
        (q_3) edge [bend left=0] 
            node [above] {$1$} (q_0)
        (q_4) edge [bend left=0]   
            node [above] {$1$} (q_1)
        (q_5) edge [bend left=0]   
            node [above] {$1$} (q_2)
            ;
    \end{tikzpicture}
    \captionsetup{font=small}
    \caption{Example of augmenting path with $k=2$, where black edges represent assignments in $\mathbf{B}$, blue edges represent assignments in $\mathbf{A}$. Orange nodes present agents and green nodes represent houses.}
    \label{figure:augmenting-path}
    \end{figure}

Define a permutation $\sigma$ as follows:
for all $t\in \{0,1,\ldots,k\}$, let $\sigma(i_t) = i_{t+1}$; let $\sigma(i_{k+1}) = i_0$; let $\sigma(i) = i$ for all $i\notin \{i_0,i_1,\ldots,i_{k+1}\}$.
Then we have
\begin{align*}
\sum_{i\in N}\frac{v_{i}(A_{\sigma(i)})}{w_{\sigma(i)}}
& = \sum_{i\notin \{i_0,\ldots,i_{k+1}\}}\frac{v_{i}(A_i)}{w_i}
+ \sum_{t=0}^k \frac{v_{i_t}(A_{i_{t+1}})}{w_{i_{t+1}}}
+ \frac{v_{i_{k+1}}(A_{i_0})}{w_{i_0}} \\
& = \sum_{i\notin \{i_0,\ldots,i_{k+1}\}}\frac{v_{i}(A_i)}{w_i}
+ \sum_{t=0}^k \frac{1}{w_{i_{t+1}}}
+ \frac{0}{w_{i_0}} > \sum_{i\in N}\frac{v_i(A_i)}{w_i},
\end{align*}
which contradicts the fact that $\mathbf{A}$ is WEFable.
\end{proof}

Combing the two lemmas above, we immediately have the following, because to check whether WEFable allocations exist, it suffices to check all PO allocations, each of which corresponds to a maximum matching.

\begin{theorem}
\label{theorem:whether-WEFable-exist}
    For any bi-valued instance with $m=n$, if the number of maximum matchings in the representing graph is polynomial (in $n$), then we have a polynomial time algorithm to check whether WEFable allocations exist.
\end{theorem}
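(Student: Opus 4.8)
The plan is to combine the two structural lemmas established just before the theorem with Corollary~\ref{corollary:polytime-checking-WEFable}. By Lemma~\ref{lemma:WEFable-implies-PO}, every WEFable allocation for a bi-valued instance with $m=n$ is Pareto optimal, and by Lemma~\ref{lemma:max-SW=PO} every PO allocation maximizes the number of matched agents, i.e.\ corresponds to a maximum matching in the representing graph $G_\mathcal{I}$. Hence it suffices to enumerate, for each maximum matching $\mathcal{M}$ in $G_\mathcal{I}$, the allocation(s) it induces and test each for WEFability using the $O(n^3)$ check from Corollary~\ref{corollary:polytime-checking-WEFable}; the instance admits a WEFable allocation if and only if one of these tests succeeds, and if none does we correctly report non-existence.

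The key steps, in order, are: first, compute a maximum matching in $G_\mathcal{I}$ in polynomial time to learn the matching number $\nu$; second, enumerate all maximum matchings of $G_\mathcal{I}$ (there are polynomially many by hypothesis), which can be done with polynomial delay using standard techniques (e.g.\ Uno's algorithm for enumerating perfect/maximum matchings); third, for each maximum matching $\mathcal{M}$, form the corresponding allocation by giving each matched agent its value-$1$ house and distributing the remaining $n-\nu$ houses arbitrarily but bijectively among the $n-\nu$ unmatched agents; fourth, build the weighted envy graph $G_\mathbf{A}$ and run the Floyd--Warshall check for positive-weight cycles. If any allocation passes, output it (together with the minimum subsidies via the formula $p_i^* = w_i\cdot\ell(i)$); otherwise output ``No such allocation''. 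Correctness of the ``otherwise'' branch is exactly the content of Lemmas~\ref{lemma:WEFable-implies-PO} and~\ref{lemma:max-SW=PO}.

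One subtlety worth handling explicitly: an unmatched agent receives a value-$\epsilon$ house, and which unmatched house it gets is immaterial to WEFability, because $v_i(A_i)=\epsilon$ regardless and the weighted envy graph edges involving an unmatched agent $i$ depend on $A_i$ only through its value $\epsilon$; swapping two value-$\epsilon$ houses between two unmatched agents leaves all relevant edge weights unchanged. So testing one representative assignment of the leftover houses per maximum matching is enough, and the number of allocations we test is bounded by the number of maximum matchings, which is polynomial by assumption. The main obstacle — really the only one — is justifying the polynomial-time enumeration of all maximum matchings given only the promise that there are polynomially many; I would either cite a polynomial-delay enumeration algorithm for maximum matchings or reduce to enumerating perfect matchings in an auxiliary graph (adding dummy vertices to saturate the smaller side) and invoke the classical result that perfect matchings can be listed with polynomial delay, so that the total running time is polynomial whenever the count is.
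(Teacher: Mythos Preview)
Your high-level strategy---reduce to PO allocations via Lemma~\ref{lemma:WEFable-implies-PO}, identify PO with maximum matchings via Lemma~\ref{lemma:max-SW=PO}, enumerate the matchings, and apply the $O(n^3)$ WEFability check from Corollary~\ref{corollary:polytime-checking-WEFable}---is exactly the paper's argument (which is a single sentence), and the enumeration detail you supply is a sensible addition.

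However, the ``subtlety'' paragraph contains a real error. Your claim that the edges of $G_\mathbf{A}$ incident to an unmatched agent $i$ depend on $A_i$ only through the value $v_i(A_i)=\epsilon$ is true for the \emph{outgoing} edges $w(i,\cdot)$ but false for the \emph{incoming} ones: $w(j,i)=v_j(A_i)/w_i - v_j(A_j)/w_j$ depends on $v_j(A_i)$, and a matched agent $j$ may well have $v_j(A_i)=1$ for an unmatched house $A_i$. Concretely, take $n=m=4$ with representing-graph edges $(1,h_1),(1,h_3),(2,h_2)$ only (so agents $3,4$ value every house at $\epsilon$) and weights satisfying $w_3<w_1<w_4$. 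For the maximum matching $\{(1,h_1),(2,h_2)\}$, the completion that gives $h_3$ to agent~$3$ has the length-$2$ cycle $(1,3)$ of weight $(1-\epsilon)(1/w_3-1/w_1)>0$ and is therefore not WEFable, whereas the completion that gives $h_3$ to agent~$4$ (and $h_4$ to agent~$3$) is permutation resistant and hence WEFable. Thus a single maximum matching can extend to both WEFable and non-WEFable allocations, and testing one representative per matching is insufficient. The paper's own one-line proof does not engage with this issue either; to make the argument complete you would need, for each fixed maximum matching, either a polynomial-time routine that decides whether \emph{some} completion is WEFable, or an argument that only polynomially many completions need to be examined.
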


\section{Conclusion and Open Problems}
Weighted fair division represents a substantial expansion of the fundamental fair division framework and has attracted considerable attention recently.
We are the first to consider weighted envy-freeness in house allocation problems.
We show that the existence of weighted envy-free house allocations can be efficiently determined by giving a polynomial time algorithm.
Furthermore, we provide a precise characterization of weighted envy-freeable allocations, show that weighted envy-freeable allocations may not always exist even with $2$ agents, and provide polynomial time algorithms for deciding their existence in some special cases.
Our work raises several intriguing open questions. 
A primary question is whether it is possible to determine, in polynomial time, if weighted envy-freeable allocations exist for a given instance in general.
A natural follow-up direction include investigating the amount of subsidy that is necessary to achieve weighted envy-freeness, if weighted envy-freeable allocations exist.
Another potential direction is to investigate the effects of strategic behavior in weighted house allocation problems.

%
%
%
 \bibliographystyle{splncs04}
 \bibliography{HAP}

\end{document}